\pgfplotsset{compat=newest}
\pgfplotsset{plot coordinates/math parser=false}
\newlength\figureheight
\newlength\figurewidth
\newtheorem{theorem}{Theorem}
\newtheorem{lemma}{Lemma}
\newtheorem{prop}{Proposition}
\newtheorem{rem}{Remark}
\newcommand{\HHx}[4]{E\left(\N,\ell,k,\qq \right)}
\newcommand{\N}{n}
\newcommand*{\rom}[1]{\expandafter\@slowromancap\romannumeral #1@}
\newcommand{\cU}{\mathcal{U}}
\newcommand{\LB}{L}
\newcommand{\Ls}{\mathcal{L}}
\newcommand{\Lst}[1]{\mathcal{L}_{#1}}
\newcommand{\qq}[0]{\alpha}
\newcommand{\nr}{s}
\newcommand{\mat}[1]{{\mathbf{B}}}
\newcommand{\vect}[1]{\mathbf{v}_{#1}}
\newcommand{\al}{\alpha}
\newcommand{\cM}{\mathsf{M}^{(c)}}
\newcommand{\pM}{\mathsf{M}^{(p)}}
\newcommand{\cy}{\bm{y}^{(c)}}
\newcommand{\sM}{\mathsf{M}^{(s)}}
\newcommand{\cq}{q}
\newcommand{\tht}{\theta}
\newcommand{\bx}{\bm{x}}
\newcommand{\supp}{\mathsf{supp}}
\renewcommand{\P}{\mathbb{P}}
\newcommand{\E}{\mathbb{E}}
\newcommand{\hh}{h}
\newcommand{\Pp}{\mathsf{P}_+}
\newcommand{\Pn}{\mathsf{P}_-}
\newcommand{\Pe}{\mathsf{P}_\text{e}}
\newcommand{\cL}{\mathcal{L}}
\newcommand{\ytz}{\bm{y}_{\mathcal{T}_z}^{\ }}
\renewcommand{\comment}[1]{ }
\newcommand{\by}{\mathbf{y}}
\newcommand{\mf}{\mu_f}
\newcommand{\mm}{\mu_m}
\newcommand{\sQ}{\mathsf{Q}}
\newcommand{\sG}{\mathsf{G}}
\begin{document}

%\title{Identification of inactive Machines in Distributed Matrix-Vector Multiplication
%Optimal Scheduling for Half-Duplex Diamond Networks with Interference}
% 
\title{Probabilistic Group Testing  in Distributed Computing with Attacked Workers}
%\author{%
  %\IEEEauthorblockN{Anonymous Authors}
  %\IEEEauthorblockA{%
   % Please do NOT provide authors' names and affiliations\\
    %in the paper submitted for review, but keep this placeholder.\\
    %ISIT23 follows a \textbf{double-blind reviewing policy}.}
%}
\author{
\IEEEauthorblockN{Sarthak Jain, Martina Cardone,  Soheil Mohajer}
University of Minnesota, Minneapolis, \!MN 55455, \!USA,
\!Email: \{jain0122, mcardone, soheil\}@umn.edu
\\
\vspace{-0.9em}
\thanks{This research was supported in part by the U.S. National Science Foundation under Grants CCF-1907785 and CCF-2045237.} 
}
\maketitle

\begin{abstract}
 %In this work, 
 The problem of distributed matrix-vector product is considered, where the server distributes the task of the computation among $\N$ worker nodes, out of which $\LB$ are {compromised}
 %malicious 
 {(but non-colluding)} and may return incorrect results. 
 %The server is then responsible to identify the malicious workers and reconstruct the correct matrix-vector product from the results of the non-malicious  workers. %Traditional/noiseless probabilistic group testing can be used to efficiently identify a subset of $\LB$ malicious workers/items in a set of $\N$ items with $O(\LB \log(\N))$ tests. However, in the scenario considered here, 
 {Specifically,} it is assumed that the 
 %malicious
 {compromised}
 workers are unreliable, that is, at any given time, each 
 %unreliable 
 {compromised}
 worker may return an incorrect and correct result with probabilities $\al$ and $1-\al$, respectively. Thus, the tests are noisy. 
 This work proposes a new probabilistic group testing approach to identify the %unreliable 
 {unreliable/compromised}
 workers with $O\left(\frac{\LB \log(\N)}{\al}\right)$ tests. Moreover, using the proposed group testing method, sparse parity-check codes are constructed and used in the considered distributed computing framework for encoding, decoding and identifying the unreliable workers. This methodology has two distinct features: (i) 
 the cost of identifying the set of $\LB$ 
 unreliable 
 %{compromised}
 workers at the server can be shown to be considerably lower than existing distributed computing methods, 
 and (ii) the encoding and decoding functions are easily implementable and computationally efficient. 
%\todo{MC: I read through the abstract.}
%Our results can be translated to similar ones for the Gaussian noise channel.
\end{abstract}

\section{Introduction}
%Distributed computing has been gaining immense popularity because of the ever-increasing need for expensive computational tasks in areas like deep learning~\cite{Goodfellow2016}. 
In distributed computing, an expensive task can be \emph{encoded} into sub-tasks of lower complexity, which can then be distributed to several worker nodes, in order to improve the overall computational speed~\cite{solanki19, tang2022, Yu2019, yu2020, Yu2017, yu_2017, jain2022}.
%~\cite{IBM},~\cite{Folding}. %For example, in online crowd-sourcing~\cite{IBM},~\cite{Folding}, the sub-tasks can be out-sourced to a potentially large pool of internet users (or, worker nodes). 
However, a subset of the workers may be {compromised}
%malicious 
and may return incorrect results of the sub-tasks assigned to them. {In other words,} these 
%malicious
{compromised
workers 
%can be 
are 
unreliable, i.e., they 
sometimes} return incorrect results {(i.e., they behave maliciously)} and at other times behave like reliable workers and return correct results, making them difficult to identify. In such scenarios, it is desirable that the encoding into sub-tasks is performed in a way that two properties hold: (i) \emph{Identification}: there is an efficient mechanism to identify the set of 
%unreliable
{unreliable/compromised}
worker nodes; and (ii) \emph{Decodability}: from the results of the sub-tasks assigned to the reliable worker nodes, the result of the original task can be recovered efficiently. 

In this work, we propose a probabilistic group testing based distributed computing scheme for the task of matrix-vector computation, which ensures both efficient identification and decodability. 
Group testing was first introduced in~\cite{dorfman1943} and extensively studied in areas ranging from medicine~\cite{verdun2021} to computer science~\cite{solanki19}, to efficiently identify a set of defective items in a large set of $\N$ items, by testing groups of items at a time rather than testing the items individually. Two types of group testing methodologies are prevalent: (i) \textit{probabilistic group testing}; and (ii) \textit{combinatorial group testing}. Combinatorial group testing is deterministic and identifies defective items with a zero probability of error\cite{Du1993}. In probabilistic group testing, this probability of error goes to $0$ as $\N \to \infty$, whereas for finite $\N$, it can be made arbitrarily small
by increasing the number of 
%tests 
%by a constant multiplicative factor
tests ~\cite{Chan2011,mazumdar2016,barg2017,inan2018}. 
Unlike traditional group testing, in this work, the goal is to identify {the set of $\LB$ (out of $\N$) unreliable workers, 
%among a set of $\N$ workers, 
where the} unreliable workers do not always behave maliciously, but 
%rather 
they sometimes hide their true identity.
%, to make it more difficult to identify them. 
A test can therefore be negative even if there were one or more unreliable workers in the tested group.  
Group testing with unreliable items {has been widely studied~\cite{atia2012boolean,cai2013grotesque, CheraghchiIT2011,  mazumdar2014group,jain2023}.} 
%in the literature 
However, the probabilistic model for the behavior of unreliable workers 
%that is practically relevant in the context of 
%distributed computing, 
that we {adopt 
%in this paper
is} different from the models considered in the above works. 
{This difference is primarily due to the fact that the behavior of an unreliable worker (i.e., either acting maliciously or acting reliably) remains the same for all the tests related to a specific computing task.
%we do not allow each test to have a different set of
%{ (or acting maliciously, since we didn't define `active' yet)} 
%{unreliable} workers acting maliciously, but rather all the tests related to a specific computing task have the same set of workers acting maliciously.
Thus, naively adopting the probabilistic} group testing methods of~\cite{atia2012boolean,cai2013grotesque, CheraghchiIT2011,  mazumdar2014group} for our model, 
%ztherefore, 
produce sub-optimal results.

%We propose a new probabilistic group testing scheme and use it to construct encoding and decoding functions for our distributed computing framework. 
The contribution of this work is threefold: (i) we propose a probabilistic group testing scheme 
%which can be used 
{to efficiently identify the $\LB$ unreliable workers (Section~\ref{section:system_model});} %among $\N$ items; 
(ii) using the proposed group testing scheme, we construct sparse parity-check codes which are used for encoding {and enable} an efficient identification of the unreliable workers {(Section~\ref{section:group_testing})}; and (iii) we construct a low-complexity decoding function for retrieving the original matrix-vector product from the results of the reliable workers {(Section~\ref{section:group_testing})}. The proposed scheme can be shown to outperform 
%is more efficient at identifying the unreliable workers than 
the MDS-code based schemes presented in~\cite{solanki19,tang2022} for unreliable worker's identification.

%\noindent {\bf{Paper Organization.}} In Section~\ref{section:system_model}, we present the distributed computing framework under consideration. 
%for the considered distributed computing problem. 
%In Section~\ref{section:group_testing}, we propose a probabilistic group testing algorithm for identifying $\LB$ unreliable items in $\N$ items. In Section~\ref{section:distributedScheme} we use the proposed group testing method to construct encoding and decoding functions that allow to identify the set of unreliable workers and retrieve the result of the computation task.
%of the distributed computing scheme and present a methodology to identify the set of unreliable workers. 
%Finally, in section~\ref{section:comparison}, we compare the computational cost of identifying the unreliable workers and decoding cost with the existing methods in the literature.

{\noindent {\bf{Notation}.} For an integer $\N \geq 1$, we define $[\N] \triangleq \{1,2, \ldots,n\}$. For a matrix $\mathsf{M}$, we use $\mathsf{M}_{i,:}$ and $\mathsf{M}_{:,j}$ to represent its $i$th row and $j$th column, respectively. Moreover, for sets $\mathcal{A}$ and $\mathcal{B}$, $\mathsf{M}_{\mathcal{A},\mathcal{B}}$ {is the submatrix of $\mathsf{M}$ where only the rows in $\mathcal{A}$ and the columns in $\mathcal{B}$ are retained.} For $x,y \in \{0,1\}$, we define $x \succ y$ if $(x,y)=(1,0)$ and $x \preceq y$, otherwise. For a vector $\bm{z}$, we define $\supp(\bm{z}) = \{j:\bm{z}_j \neq 0\}$.}

\section{System Model}\label{section:system_model}
The server node aims at computing $T$ matrix-vector products $\mat{t} \cdot \vect{t}$ for $t \in [T]$, where $\mat{t} \in \mathbb{F}^{r \times c}$ and $\vect{t} \in \mathbb{F}^{c \times 1}$. In other words, we aim at computing the $t$th matrix-vector product $\mat{t}\cdot \vect{t}$ in time-slot $t\in [T]$. To speed up the computation, the server can distribute the task among $\N$ workers, denoted by the set $[\N]$. However, a set $\Ls \subseteq [\N]$ of $|\Ls| = \LB$ workers are unreliable and may return incorrect/noisy results for the task assigned to them. We model the random behavior of these $\LB$ unreliable workers as follows: at time-slot $t\in [T]$, each unreliable worker is \emph{attacked} with probability 
$\alpha$, and returns an incorrect result, independent of  other workers and other time-slots. If an unreliable worker remains unattacked in a time-slot $t$, it behaves reliably and returns the correct result. %for each matrix-vector product ${\mat{t} \cdot \vect{t}}, t \in [T]$, each unreliable worker in the set $\Ls$ returns an incorrect/noisy result with probability $\al$ and it returns the correct result with probability $1-\al$, independent of the other workers and other time-slots. 
We denote by $\Lst{t} \subseteq \Ls$ the subset of unreliable workers, which are attacked  in time-slot $t$.
%We refer to the workers in $\Lst{t}$ as being \emph{attacked} in time-slot $t$. 
For any $w \in \Ls$, we have 
\begin{align} \label{eq:attacked}
    \mathbb{P}\left(w \in \Lst{t} \!\bigm|\! w \in \Ls\right) = \al.
\end{align}
For the distributed computing scheme, the vectors $\vect{t}$'s are first communicated to all the $\N$ workers. Moreover, each worker $w \in [\N]$ receives a matrix $\mathbf{W}^{(w)} \in \mathbb{F}^{\nr \times c}$ with $\nr=r/k\ll r$ rows (assuming that $k\in \mathbb{N}$ divides $r$) obtained by using a set of {encoding} functions ${f^{(w)}}: \mathbb{F}^{r \times c} \xrightarrow{} \mathbb{F}^{\nr \times c}$ as 
\begin{equation}
\label{eq:Encoding}
    \mathbf{W}^{(w)}={f^{(w)}}(\mat{t}), \qquad\qquad w \in [n].
\end{equation}
{For each $t \in [T]$,} the $w$th worker then computes the matrix-vector product $\mathbf{a}_t^{(w)} = {\mathbf{W}^{(w)}} \cdot \vect{t}$, and sends 
a message  
back to the server. In particular, the result $\tilde{\mathbf{a}}_t^{(w)}$ returned by the $w$th worker is given by
%can be probabilistically expressed using the following Z-channel,
\begin{equation} \label{eq:z-channel}
\tilde{\mathbf{a}}_t^{(w)} = \begin{cases}
   {\mathbf{W}^{(w)}} \cdot \vect{t}  &  \text{if } w \notin \Lst{t}, \\
    {\mathbf{W}^{(w)}} \cdot \vect{t}+\mathbf{z}_t^{(w)} \qquad & \text{if } w \in \Lst{t},
    \end{cases}    
\end{equation}
where $\mathbf{z}_t^{(w)} \in \mathbb{F}^{s \times 1}$ is the noise vector corresponding to worker $w$ and is independent of the noise vectors of other attacked workers {(since we assume the unreliable workers to be non-colluding).} 
The 
%master
{server}
is then required to identify the unreliable workers and subsequently reconstruct/decode the correct product $\mat{t} \cdot \vect{t}$ by applying a decoding function $g(\cdot)$ on the results received from (a subset of) the reliable workers. 
%In other words, if we let $\cP(\N,\N-\LB)$ to be all possible subsets of $[\N]$ of size $\N-\LB$, then there should exist a decoding function ${g:\cP(\N,\N-\LB)\times \left(\mathbb{F}^{\nr \times 1}\right)^{\N-\LB} \rightarrow \mathbb{F}^{r \times 1}}$ that satisfies $g\big(\cU,\!\big\{\!\mathbf{a}_t^{(w)}\!\!:\! w\in \cU\!\big\}\!\big) \!=\! \mat{t} \cdot \vect{t}$ for the set of reliable workers $\cU$.  

We propose a distributed computing scheme where the encoding functions $f^{(i)}$ in~\eqref{eq:Encoding} and the decoding function $g$ are such that the overall computational complexity at the server is low because: (i) the set of unreliable workers can be identified with a low computational cost of $O\left(\frac{\LB \log(\N)}{\al}\right) \times O\left(\frac{r\N}{\LB k}\right)$; and (ii) the decoding can be performed with a cost of $O\left(\frac{r \N}{k}\right) \times T$ operations for $T$ matrix-vector products.

\section{Group Testing for Identifying Unreliable Workers} \label{section:group_testing}
We are interested in identifying the set of all unreliable workers $\Ls$. 
Here, we {face} two main challenges: (1) workers cannot be tested individually, and (2) an unreliable worker is not always attacked and hence, we cannot identify it from its result in a single time-slot. 
%The exact methodology to test a set of workers is described in section~\ref{section:distributedScheme}. 
%In this section, 
%and the output of a test is positive if {at least one worker} in that group returned an incorrect result in the time-slot during which the test is being performed. 

Instead, we can test a group of workers, and 
a test result will be negative if all the unreliable workers in the selected group are unattacked during the time-slot of interest.  In other words, a test performed based on the results of time-slot $t$ will be positive if and only if there is at least one attacked unreliable worker $w \in \Lst{t}$ in the tested group. 
%The probability of an unreliable worker being  attacked in any time-slot is $\al$, independent of the other unreliable workers and other time-slots. 

We let $m_t$ be the number of tests that are performed in time-slot $t\in[T]$. Then the total number of tests is ${M := \sum_{t=1}^T m_t}$. These tests can be represented using a \emph{contact} matrix ${\cM \in \{0,1\}^{M \times \N}}$, where $\cM_{j,w}=1$ if and only if the $w$th worker is included in the $j$th test. The tests conducted in time-slot $t$ correspond to the rows ${\mathcal{T}_t \!=\! [1+\sum_{j=1}^{t-1} m_t\!:\! \sum_{j=1}^{t} m_t]}$ of $\cM$. 
%row of $\cM$,) corresponds to the $j$th test and the non-zero elements of $\cM_{j,:}$ are the items included in the $j$th test.
%{We refer to $\cM$ as the contact matrix.}
We can use a vector $\bx \in \{0,1\}^{n \times 1}$ to indicate whether or not each worker is  unreliable, that is, $\bx_w = 1$ if and only if $w \in \Ls$. %Formally, we have $\bx_j=1$ if $j\in \Ls$, and $\bx_j=0$, otherwise.
%\begin{equation} \label{eq:define_x}
%    \bx_j = \begin{cases}
%    1 & \text{ if } j \in \Ls, \\
%    0 & \text{ otherwise.}
%    \end{cases}
%\end{equation}
Ideally, if all the unreliable workers were always attacked, i.e., $\alpha=1$, the result of the tests can be represented by a vector  $\cy \in \{0,1\}^{M}$ as
\begin{equation} \label{eq:define_y}
    \cy = \cM \odot \bm{x},
\end{equation}
where the multiplication and addition are logical \texttt{and} and \texttt{or}, respectively. More precisely, we have $\cy_i = \bigvee_{j=1}^n (\cM_{i,j} \land \bm{x}_j)$.

To capture the unreliable behavior of the {compromised workers}, we adopt the notation in~\cite{CheraghchiIT2011}, and define a \emph{sampling} matrix $\sM$ which is obtained from the contact matrix $\cM$~as 
%In time-slot $t$, each unreliable worker $w \in \Ls$ can either be attacked, that is $w \in \Lst{t}$, or unattacked, that is ${w \notin \Lst{t}}$, with probabilities $\al$ and $1-\al$, respectively. If $w\in\Lst{t}$, then in time-slot $t$ the unreliable worker exposes its true identity and therefore, for any test $j \in \mathcal{T}_t$, $\sM_{j,w} = \cM_{j,w}$. Differently, in time-slot $t$, if $w\in\Ls \setminus\Lst{t}$ is an unattacked unreliable worker, then this worker is hiding its true identity by behaving like a reliable worker and is therefore undetectable in any tests $j \in \mathcal{T}_t$ conducted in time-slot $t$, which makes $\sM_{j,w} = 0$ for all $j\in \mathcal{T}_t$. More precisely, this can be modeled~as,
\begin{equation} \label{eq:sampling0}
\sM_{\mathcal{T}_t,w} = \begin{cases}
   \cM_{\mathcal{T}_t,w}  &  \text{if } w \in \left([\N]\setminus \Ls\right) \cup \Lst{t}, \\
   \mathbf{0} & \text{if } w \in \Ls \setminus \Lst{t}.
    \end{cases}    
\end{equation}
%\todo{MC: Consider adding a figure for the Z-channel is there is enough space.} 
  %in the first test item $\{4\}$ is unattacked, in the second test items $\{2,3\}$ are unattacked, and in the third test {item $\{2\}$} is unattacked.
Moreover, the result of the actual tests in the presence of unreliable workers is given by 
\begin{align}
    \bm{y} = \sM \odot \bm{x}.
\end{align}
{To understand how the sampling matrix models the unreliable behavior, let us assume that worker $w$ was selected to be included in a test $j \in \mathcal{T}_t$ in time-slot $t$. Then, we have $\cM_{j,w}=1$. If $w\in\Lst{t}$, then in time-slot $t$ the unreliable worker is attacked and exposes its true identity and therefore, we should have $\bm{y}^{(c)}_j = \bm{y}_j = 1$, which is achieved by setting $\sM_{j,w} = \cM_{j,w}$ in such a scenario. Now, assume that worker $w$ is unreliable (i.e., ${\bm{x}_w}=1$), but unattacked in time-slot $t$, that is, $w\in \Ls \setminus \Lst{t}$. In this case, $w$ hides its true identity. Since $\cM_{j,w}=1$, we have {$\cy_{j}=1$}. But the actual test result $\bm{y}_j$ should not be influenced by $w$. This unattacked behavior of worker $w$ in test $j \in \mathcal{T}_t$ is captured by setting $\sM_{j,w}=0$. }

Note that if $\alpha=1$, then $\Lst{t} = \Ls$ for all $t \in [T]$, and therefore $\cM_{j,w} = \sM_{j,w}$ for all $j \in [M]$ {and $w \in [n]$,} which reduces the problem to the classical group testing problem~\cite{dorfman1943}.

Examples of a contact matrix $\cM$ and a sampling matrix $\sM$ for $\N = 5$ workers, $T=2$, $m_1=2$, and $m_2 = 1$ is given in~\eqref{ex1}. Here, 
$\Ls = \{3,4\}$, and $\cL_1=\{4\}$.  Since worker~$3$ is unattacked at $t=1$, the entries of $\cM$ at column $3$ and $\mathcal{T}_1=[1:2]$ are replaced by $0$ in $\sM$: 
%unreliable worker set, $T=2$ time-slots and $m_1=2, m_2 = 1$ tests in the corresponding time-slots are given by,
\begin{equation} \label{ex1}
    \cM \!=\! \begin{bmatrix}
1 & 0 & 0 & 1 & 0\\
0 & 1 & 1 & 0 & 0 \\
1 & 1 & 0 & 1 & 0
\end{bmatrix}, \ 
 \sM \!=\! \begin{bmatrix}
1 & 0 & 0 & 1 & 0\\
0 & 1 & 0 & 0 & 0 \\
1 & 1 & 0 & 1 & 0
\end{bmatrix}.
\end{equation}
%In this example, the first two rows correspond to the tests conducted in time-slot $1$ (since $m_1=2$). ${\cM_{\{1,2\},1}=[1,1]^T}$ and $\sM_{\{1,2\},1}=[0,0]^T$ imply that worker $1$ was selected in both tests of time-slot $1$, but it was unattacked in time-slot $1$. Note that we do \emph{not} have access to $\sM$, and it is only used to model the random behavior of the {unreliable workers.}
\begin{rem}
{Note that} the tests are always conducted according to the contact matrix $\cM$, but due to the random  behavior of the unreliable workers, the actual result of the tests is determined by $\bm{y} = \sM \odot \bm{x}$ (rather than ${\cy = \cM \odot \bm{x}}$). 
\end{rem}
%Whenever a non-zero element of $\cM$ becomes zero in $\sM$, we say that element was flipped. For example, the element $(1,4)$ was flipped when going from $\cM$ in~\eqref{ex1} to $\sM$ in~\eqref{ex12}.

We seek to minimize the number of tests $M$, i.e., design a contact matrix $\cM$ with as few rows as possible, such that, given $\cM$ and $\bm{y}$, we can identify the unreliable set $\Ls$ (or equivalently the vector $\bm{x}$), with an arbitrarily small probability of error (as $\N\rightarrow \infty$).
%the original vector $\bx$ and this guess has a very low probability of being incorrect. 
In particular, 
%the probability of error should approach $0$ as $\N$ increases, and
we are interested in characterizing how the number of tests $M$ should scale with respect to the underlying parameters, $\N, \LB$ and $\al$, to achieve a vanishing error probability. %\todo{SM: I read Sec. II up to here.}
%{In the remaining of this section, we propose a construction for $\cM$ and study the behavior of the corresponding $M$.}

%\subsection{Scheme 1: Multiple tests in each time-slot}
%In this version of probabilistic group testing, 
%In the following, we propose a construction for $\cM$ and study the behavior of the corresponding $M$
To construct $\cM$, we consider parameters $(Z,m)$, where 
we assume $m_t=m$ tests are performed only for  $t\in [Z]$ (i.e., $m_t=0$  for $t>Z$). 
%
%that in each of the first $Z$ out of the $T$ 
%time-slots 
%$t = [T]$
%\{1,2,\dots,Z\}$, 
%a fixed number of tests $m$ is performed. That is, $m_t = m$ for $t \in [Z]$ and $m_t=0$ for $t>Z$. 
Therefore, the total number of tests performed is $M=mZ$. The parameters $(m,Z)$ will be determined later to guarantee the identification {of the} unreliable workers with high probability. 

%We are interested in finding the pair $(m,Z)$ such that the error probability of identifying the {unreliable workers} goes to $0$ as $n \to \infty$.

\noindent \textbf{Probabilistic construction for $\cM$ and $\sM$:} \textit{We generate $\cM \in \mathbb{F}^{Zm \times \N}$ randomly, where each entry is drawn from a Bernoulli distribution with parameter $\cq := \frac{\tht}{\LB}$ (where the parameter $\tht$ will be determined later), independent of all other entries. For this scheme, the corresponding sampling matrix $\sM$ is generated from $\cM$ as follows,}
\begin{equation} \label{eq:sampling}
{\sM_{\mathcal{T}_z,w} = \begin{cases}
   \cM_{\mathcal{T}_z,w}  &  \text{if } w \in \left([\N]\setminus \Ls\right) \cup \Lst{z}, \\
   \mathbf{0} & \text{if } w \in \Ls \setminus \Lst{z},
    \end{cases}  
    }
\end{equation}
where, $\mathcal{T}_z=[(z-1)m+1:zm]$ is the set of rows which correspond to time-slot {$z \in Z$.}

\noindent \textbf{The $(d,\epsilon)$-threshold decoder: } \textit{For $z \in [Z]$, we let ${\ytz \in \{0,1\}^m}$ be the test result vector of time-slot $z$, which is given by $\ytz = \sM_{\mathcal{T}_z,:} \odot \bm{x}$,
%\begin{align}
%     \ytz = \sM_{\mathcal{T}_z,:} \odot \bm{x},
%\end{align}
where $\mathcal{T}_z=[(z-1)m+1:zm]$ is the set of tests in time-slot $z$. Moreover, we define the score of a worker $w \in [n]$ in time-slot $z$ by $I_{w,z}$, which is given by
\begin{align} \label{eq:scorecalc}
    I_{w,z} = \begin{cases}
        1 &\text{if } \cM_{\mathcal{T}_z,w} \preceq  \ytz \text{ and } \cM_{\mathcal{T}_z,w} \neq \mathbf{0} , \\
        \epsilon & \text{if }\cM_{\mathcal{T}_z,w} = \mathbf{0}, \\
        0 & \text{if }\cM_{\mathcal{T}_z,w} \centernot \preceq  \ytz.
    \end{cases}
\end{align}
Then, a worker $w$ will be marked as unreliable (i.e., {${\hat{\bm{x}}_w}=1$})
%\todo{MC: What is $\hat{\bm{x}}$?} 
if and only if
%\begin{equation}\label{eq:decoder}
$I_w := \sum\nolimits_{z=1}^{Z} I_{w,z} \geq d$.
%\end{equation}
}
{Note that an unreliable worker's score is expected to be higher than a reliable worker's score. We leverage this fact to obtain the following result that} guarantees the success of the proposed decoder (with high probability) for {a proper} choice of parameters. 
%asymptotic bounds on the values of $(m,Z)$ for which the probability of error of identifying the unreliable workers is proportional to $\N^{-\beta}$ for any {$\beta > 0$,} which goes to $0$ as $n\to \infty$.
\begin{theorem} \label{thm1}
For any arbitrary, but a-priori fixed, {$L$-sparse} vector $\bm{x}$ and any error exponent $\beta>0$, there exists a choice of parameters $\left(q,  m, Z, d, \epsilon \right)$ such that  the $(d,\epsilon)$-threshold decoder can decode $\bm{x}$ with  
%
%with $d = (1+\eta) Z\left( \hh_L - (1-\epsilon)\left(1-\cq\right)^m\right)$, 
%with a proper parameter,
error probability ${P_e=\P[\hat{\bm{X}}\neq \bm{x} | \bm{X}= \bm{x}] \leq \N^{-\beta}}$ with at most ${450 (1+\beta)\frac{\LB\log(\N)}{\al}}$ tests. %of identifying the `defective' set $\Ls$ of size $\LB$ 
%which goes to $0$ as $\N \to \infty$, where $\hh_x$, for $x \in \mathbb{N}$, is defined by
\end{theorem}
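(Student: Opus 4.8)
\emph{The plan is} to analyze the random score $I_w = \sum_{z=1}^{Z} I_{w,z}$ separately for reliable and unreliable workers, to show that its expectation is separated by a gap of order $\alpha Z$ between the two cases, and then to use a multiplicative (Chernoff-type) concentration bound to place the threshold $d$ inside this gap so that each worker is classified correctly except with probability $n^{-(1+\beta)}$, which survives a union bound over the $n$ workers. First I would fix the parameters: set $q=\theta/L$ and choose $m$ so that $mq=c_0$ is an absolute constant, which guarantees that a worker is untested in a given slot with probability $p_0=(1-q)^m\approx e^{-c_0}$ bounded away from $1$ (so that ``in at least one test'' has constant probability); then set $Z=\lceil C(1+\beta)\tfrac{\log n}{\alpha}\rceil$, take $\epsilon$ to be a small constant, and put $d$ between the two expected scores. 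The absolute constants $\theta,c_0,C,\epsilon$ are exactly what produce the final factor $450$.

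The core is the per-slot expectation computation. Conditioning on the number $L_z=|\Lst{z}|\sim\mathrm{Bin}(L,\alpha)$ of attacked workers, a test containing a given worker is positive precisely when it also contains one of the $L_z$ attacked workers, an event of probability $1-(1-q)^{L_z}$, and distinct tests are conditionally independent. For a reliable worker this gives, in expectation over $L_z$, $\E[I_{w,z}]=\epsilon p_0+(1-q(1-q)^{L_z})^m-p_0$, whereas for an unreliable worker the attacked event alone forces every test it lies in to be positive, contributing the additional term $\alpha(1-p_0)$. Subtracting, the $\epsilon p_0$ baselines cancel and the gap is $\E[I^{\mathrm{unrel}}_{w,z}]-\E[I^{\mathrm{rel}}_{w,z}]\gtrsim \alpha\bigl(1-(1-q(1-q)^{L_z})^m\bigr)\approx \alpha\bigl(1-e^{-c_0 e^{-\theta\alpha}}\bigr)$, which is $\Theta(\alpha)$ once $\theta$ is small enough that the reliable ``accidental all-positive'' probability is only a small multiple of $\alpha$.

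I would then use that the randomness of $\cM_{\mathcal{T}_z,:}$ and of the attack set $\Lst{z}$ is independent across $z$, so $I_w$ is a sum of $Z$ independent terms with $I_{w,z}\in\{0,\epsilon,1\}$. Applying the multiplicative Chernoff bound to the number of score-$1$ slots (and controlling the small $\epsilon$-baseline fluctuations) yields $\P[I_w<d\mid w\in\Ls]\le n^{-(1+\beta)}$ from the lower tail and $\P[I_w\ge d\mid w\notin\Ls]\le n^{-(1+\beta)}$ from the upper tail, both holding precisely because $\alpha Z\gtrsim(1+\beta)\log n$. A union bound over the $n$ workers then gives $P_e\le n\cdot n^{-(1+\beta)}=n^{-\beta}$, and the test count is $M=mZ=\tfrac{c_0 C}{\theta}(1+\beta)\tfrac{L\log n}{\alpha}$, which the constants $\theta,c_0,C,\epsilon$ are calibrated to keep below $450(1+\beta)\tfrac{L\log n}{\alpha}$.

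The main obstacle is obtaining linear dependence on $1/\alpha$ rather than $1/\alpha^2$: a Hoeffding bound applied to a gap of size $\Theta(\alpha Z)$ over $Z$ bounded terms would only give $e^{-\Theta(\alpha^2 Z)}$ and hence force $Z\sim \log n/\alpha^2$. Reaching $Z\sim \log n/\alpha$ requires the multiplicative/Poisson form of Chernoff, which is effective here only because both the unreliable signal and the reliable false-alarm rate are $\Theta(\alpha)$-small per slot. The delicate part is choosing $\theta$ so that the reliable accidental-positive rate is a small enough constant multiple of $\alpha(1-p_0)$ that a single threshold $d$ separates the two Binomial-type tails, and then tracking all the resulting constants to certify the explicit factor $450(1+\beta)$.
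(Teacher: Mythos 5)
Your overall strategy is the same as the paper's: compute per-slot expected scores conditioned on $|\mathcal{L}_z|\sim\mathrm{Bin}(L,\alpha)$, show the reliable/unreliable means are separated by a gap of order $Z\alpha$, place the threshold $d$ in the gap, and use a multiplicative Chernoff bound (you correctly observe that an additive/Hoeffding bound would force $Z\sim\log n/\alpha^2$). However, there is one genuine flaw in your parameter choices that breaks exactly the $1/\alpha$ scaling you are trying to certify: you ``take $\epsilon$ to be a small constant.'' The paper sets $\epsilon=\theta\alpha$, and this is not a cosmetic choice --- it is the crux of the argument.

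To see why a constant $\epsilon$ fails, write $I_w=A_w+\epsilon B_w$, where $A_w$ counts score-$1$ slots and $B_w$ counts untested slots. The signal gap lives entirely in $A_w$ and is $\Theta(Z\alpha)$, while $B_w\sim\mathrm{Bin}\left(Z,(1-q)^m\right)$ has the \emph{same} distribution for reliable and unreliable workers, with mean $\Theta(Z)$. For a fixed threshold test at confidence $n^{-(1+\beta)}$, the baseline term $\epsilon B_w$ must not fluctuate by more than a constant fraction of the gap; since its deviations at that confidence level are of order $\epsilon\sqrt{Z\log n}$, you need $\epsilon\sqrt{Z\log n}\lesssim Z\alpha$, i.e.\ $Z\gtrsim \epsilon^2\log n/\alpha^2$, which with constant $\epsilon$ is the $1/\alpha^2$ regime. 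The same failure appears in the threshold-decoder language of the paper: with constant $\epsilon$ one has $\mu_f=\Theta(Z)$, so keeping $d=(1+\eta)\mu_f$ below $\mu_m=\mu_f+\Theta(Z\alpha)$ forces $\eta=O(\alpha)$, and the Chernoff exponent $\frac{\eta^2}{2+\eta}\mu_f=\Theta(Z\alpha^2)$ again gives $1/\alpha^2$. The multiplicative bound only buys $e^{-\Theta(Z\alpha)}$ tails because the paper's choice $\epsilon=\theta\alpha$ makes the \emph{entire} reliable mean $\mu_f=\Theta(Z\alpha)$, so that a constant relative deviation $\eta=1$ both clears the gap and yields exponent $\Theta(Z\alpha)$. (Any $\epsilon=O(\sqrt{\alpha})$ could be salvaged by splitting off the $B_w$ term, but a constant cannot.) Aside from this, your expectation computation, the independence across slots, the union bound, and the accounting $mZ=\frac{L}{\theta}\cdot\frac{\lambda\log n}{\alpha}$ leading to the factor $450(1+\beta)$ all match the paper's proof.
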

{\begin{rem}
    In this work, we focus on noise-level-independent Bernoulli designs of the contact matrix $\cM$ with parameter $\cq=\frac{\tht}{\LB}$ (where $q$ does not depend on the noise parameter $\al$). Advantages of such a design include: (1) increased robustness against erroneous estimates of $\al$, and (2) no requirement of redesigning $\cM$ every time that $\al$ fluctuates. Under such a construction of $\cM$, it is not difficult to show an information theoretic converse of $O\left(\frac{\LB\log(n)}{\al \log(1/\al)}\right)$ tests for the group testing noise model considered in this work. The $(d,\epsilon)$-threshold decoder almost achieves this converse, except for a small factor of $\log(1/\al)$. Moreover, noise-level-dependent designs could lead to better achievable bounds, as it has been observed in other noise models~\cite{cheng2023,arpino2021} and we intend to explore such designs in the extended version of this paper. 
\end{rem}}
%We will identify the appropriate scaling behavior of the parameters $m$ and $Z$ for the sampling matrix. 
In the proof of Theorem~\ref{thm1}, we will show that under a proper choice of {$\left(q,  m, Z, d, \epsilon \right)$,} 
%m$ and $Z$, 
the error probability vanishes as $n^{-\beta}$. In particular, we set $(q,m,Z,\epsilon)=\left(\frac{\theta}{L}, \frac{L}{\theta}, {\frac{\lambda \log n}{\alpha}}, \theta \alpha\right)$, where $\lambda$ and  $\theta\in[0,1]$ are design parameters, to be determined later. Moreover, 
%$c$ is a constant which will be set $c=50 (1+\beta)$, and 
the value of $d$ is given in~\eqref{eq:d}.
%{Moreover, we set $d = (1+\eta) \mu_{+}$, where $\eta = ?$ and $\mu_{+}$ is given in Proposition~\ref{prop:expected_scores}.} 

We will need the following two propositions, which play an essential role in the proof. The proofs of the propositions are presented in Appendix~\ref{app:proof_prop1} {and Appendix~\ref{app:proof_prop2}.} 

\begin{prop} \label{prop:expected_scores}
    The total expected scores 
    %$I_w = \sum_{z=1}^Z I_{w,z}$ 
    for reliable and unreliable workers are given by,
    \begin{align}
        \mf&:=\mathbb{E}\!\left[I_{w}\! \bigm| \! w \notin \Ls\right] =  \!Z\left( \hh_L - (1\!-\!\epsilon)\left(1\!-\!\cq\right)^m\right),  \\
        \mm&:=\mathbb{E}\!\left[I_{w}\!\bigm|\! w \in \Ls\right] \!=\! Z\left(\al \!+\!(1\!-\!\al) \hh_{\LB-1}\!-\! (1\!-\!\epsilon)(1\!-\!\cq)^m\right),\nonumber
    \end{align}
    where for every integer {$x\in [0:L]$,} we have that
    \begin{align}\label{eq:def:h}
    \hh_x := \sum\nolimits_{\ell=0}^x {\binom{x}{\ell}} \al^\ell (1-\al)^{x-\ell} \left(1-\cq(1-\cq)^\ell\right)^m.
\end{align} 
\end{prop}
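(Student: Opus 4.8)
The plan is to compute the expectation for a single time-slot and then invoke linearity over the $Z$ i.i.d.\ slots. Fix a worker $w$ and a slot $z$, and introduce the two events $A_{w,z} = \{\cM_{\mathcal{T}_z,w} \preceq \ytz\}$ (every test of slot $z$ containing $w$ is positive) and $B_{w,z} = \{\cM_{\mathcal{T}_z,w} = \mathbf{0}\}$ (worker $w$ lies in no test of slot $z$). Reading off~\eqref{eq:scorecalc}, the score equals $1$ exactly on $A_{w,z}\setminus B_{w,z}$, equals $\epsilon$ on $B_{w,z}$, and $0$ otherwise; since $B_{w,z}\subseteq A_{w,z}$ (an all-zero column vacuously satisfies $\preceq$), I would write $\mathbb{E}[I_{w,z}] = \P(A_{w,z}) - (1-\epsilon)\P(B_{w,z})$. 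Because the entries of $\cM$ in column $w$ are i.i.d.\ Bernoulli$(\cq)$, the correction term is immediate, $\P(B_{w,z}) = (1-\cq)^m$, and it does not depend on whether $w$ is reliable. Thus everything reduces to computing $\P(A_{w,z})$ in the two cases, and the factor $Z$ then comes from summing the i.i.d.\ per-slot scores.

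For $\P(A_{w,z})$ I would condition on the attack pattern in slot $z$. The key observation is that a test $j\in\mathcal{T}_z$ is positive if and only if it contains at least one attacked unreliable worker, i.e.\ some $w'\in\Lst{z}$ with $\cM_{j,w'}=1$. Condition on $|\Lst{z}|=\ell$. For a reliable $w\notin\Ls$, its membership indicator $\cM_{j,w}$ is independent of the positivity of test $j$ (determined by the columns indexed by $\Lst{z}\subseteq\Ls$, none of which is $w$); hence for each of the $m$ tests the event ``$w\in j \Rightarrow j$ positive'' has probability $(1-\cq)+\cq\bigl(1-(1-\cq)^\ell\bigr)=1-\cq(1-\cq)^\ell$, and independence across the $m$ rows gives $\P(A_{w,z}\mid |\Lst{z}|=\ell) = \bigl(1-\cq(1-\cq)^\ell\bigr)^m$. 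Averaging over $|\Lst{z}|\sim\mathrm{Bin}(\LB,\al)$ reproduces exactly the definition~\eqref{eq:def:h} with $x=\LB$, so $\P(A_{w,z})=\hh_{\LB}$ and $\mf = Z\bigl(\hh_{\LB}-(1-\epsilon)(1-\cq)^m\bigr)$.

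For an unreliable $w\in\Ls$ I would split on whether $w$ itself is attacked in slot $z$. If $w\in\Lst{z}$ (probability $\al$ by~\eqref{eq:attacked}), then every test containing $w$ automatically contains an attacked unreliable worker, namely $w$, so it is positive and $A_{w,z}$ holds with probability $1$. If instead $w\in\Ls\setminus\Lst{z}$ (probability $1-\al$), then $w\notin\Lst{z}$, the attacker pool lives in the remaining $\LB-1$ unreliable workers, and the same independence argument applies with $|\Lst{z}|\sim\mathrm{Bin}(\LB-1,\al)$, yielding $\hh_{\LB-1}$. Combining, $\P(A_{w,z})=\al+(1-\al)\hh_{\LB-1}$, whence $\mm = Z\bigl(\al+(1-\al)\hh_{\LB-1}-(1-\epsilon)(1-\cq)^m\bigr)$.

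The delicate point, and the step I would be most careful about, is the independence used to factorize $\P(A_{w,z}\mid|\Lst{z}|=\ell)$ across the $m$ tests and to separate $w$'s membership from the tests' positivity. This is clean precisely because $\cM$ has i.i.d.\ entries and because, for a reliable worker or for an unattacked unreliable worker, $w$ is excluded from the attacker set $\Lst{z}$ that governs positivity; the attacked-unreliable case is what forces the deterministic contribution $\al$ and is the source of the gap between $\mm$ and $\mf$. No concentration is needed here---this is a pure expectation computation---so once the conditioning and independence are set up the remaining algebra is routine.
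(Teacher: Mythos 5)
Your proof is correct and follows essentially the same route as the paper's: both reduce to a per-slot expectation, condition on the number of attacked unreliable workers $|\Lst{z}|$ (Binomial over $\LB$ or $\LB-1$ workers depending on the case), factorize across the $m$ independent tests to obtain $\left(1-\cq(1-\cq)^\ell\right)^m$, treat the attacked-unreliable case as deterministically positive, and sum over the $Z$ slots. The only cosmetic difference is that you package the $\epsilon$-correction upfront via $\E[I_{w,z}]=\P(A_{w,z})-(1-\epsilon)\P(B_{w,z})$, whereas the paper computes $\P(I_{w,z}=1)$ and $\P(I_{w,z}=\epsilon)$ separately and then combines them---identical content.
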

%The next proposition shows some important properties of~$\hh_{x}$. 
\begin{prop}\label{prop:bounds_hL}
Assume $qm \leq 1$. Then, $\hh_x$ {in~\eqref{eq:def:h}}  satisfies 
%=\frac{\theta}{L}$, $m=c_1 L $, ???  $\tht \leq 1$ and $c_1 \tht \leq 1$, the term $\hh_{\LB}$ can be bounded as follows
\begin{align}\label{eq:h:bound}
   (1\!-\!\cq)^m \leq \hh_{\LB}\! \leq (1\!-\!\cq)^m+ mLq^2 \al,
\end{align}\vspace{-7mm}
\begin{align}\label{eq:h:diff}
     {\text{and, }} \quad \hh_{\LB} - (1-\alpha)\hh_{\LB-1}\leq \alpha \left(1-mq(1-qL)/{2}\right).
\end{align}
\end{prop}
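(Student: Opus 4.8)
The plan is to read $h_x$ as an expectation. Writing $g(\ell) := (1-q(1-q)^\ell)^m$ and letting $\ell$ be a $\mathrm{Binomial}(x,\alpha)$ random variable, the definition \eqref{eq:def:h} is exactly $h_x = \mathbb{E}[g(\ell)]$. I will exploit that $g$ is monotone in $\ell$ and, after a first-order expansion, nearly linear in the small quantity $q(1-q)^\ell$.

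For the lower bound in \eqref{eq:h:bound}, I observe that $q(1-q)^\ell$ is decreasing in $\ell$, so $g(\ell)$ is increasing in $\ell$ and hence $g(\ell) \ge g(0) = (1-q)^m$ for every $\ell$; averaging gives $h_L \ge (1-q)^m$. For the upper bound I control $g(\ell) - g(0)$: since $t \mapsto (1-t)^m$ is $m$-Lipschitz on $[0,1]$ and $q(1-q)^\ell \le q$, I get $g(\ell)-g(0) \le m\,(q - q(1-q)^\ell) = mq\,(1-(1-q)^\ell) \le mq \cdot \ell q = m\ell q^2$, using Bernoulli's inequality $1-(1-q)^\ell \le \ell q$. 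Taking expectations and using $\mathbb{E}[\ell] = L\alpha$ yields $h_L - (1-q)^m \le mq^2 L\alpha$, which is \eqref{eq:h:bound}.

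For \eqref{eq:h:diff} the first move is a one-step recursion in $L$. Splitting the $L$-th attack indicator off, $\ell \stackrel{d}{=} \ell' + B$ with $\ell' \sim \mathrm{Binomial}(L-1,\alpha)$ and $B \sim \mathrm{Bernoulli}(\alpha)$ independent, gives $h_L = (1-\alpha)h_{L-1} + \alpha\,\mathbb{E}[g(\ell'+1)]$, so it suffices to prove $\mathbb{E}[g(\ell'+1)] \le 1 - mq(1-qL)/2$. Setting $x := q(1-q)^{\ell'+1} \in [0,q]$, I apply the elementary bound $(1-x)^m \le 1 - mx + \binom{m}{2}x^2$ (provable by checking that its difference has a nonnegative derivative via Bernoulli), which rearranges to $1 - g(\ell'+1) \ge mx\big(1 - \tfrac{(m-1)x}{2}\big)$. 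Because $(m-1)x \le mq \le 1$ under the standing assumption $qm\le 1$, the parenthetical factor is at least $1/2$ pointwise, so $1 - g(\ell'+1) \ge \tfrac{m}{2}x$. Finally I lower-bound $\mathbb{E}[x] = q\,\mathbb{E}[(1-q)^{\ell'+1}] \ge q\big(1 - q\,\mathbb{E}[\ell'+1]\big) = q\big(1 - q(\alpha(L-1)+1)\big) \ge q(1-qL)$, again via Bernoulli and $\alpha \le 1$. Combining gives $\mathbb{E}[1 - g(\ell'+1)] \ge mq(1-qL)/2$, which is the claim (and is vacuous, hence harmless, when $qL \ge 1$).

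The routine parts are the Lipschitz and Bernoulli estimates; the one genuinely delicate point is the second display of \eqref{eq:h:diff}, where the quadratic term from the expansion of $(1-x)^m$ must be absorbed. The factorization $mx\big(1-\tfrac{(m-1)x}{2}\big)$ together with $qm\le 1$ is exactly what keeps the second-order loss to the factor $1/2$ and produces the stated $(1-qL)$. A cruder split that lower-bounds $m\mathbb{E}[x]$ and upper-bounds $\binom{m}{2}\mathbb{E}[x^2]$ separately loses the constant and only delivers the condition $mq \le 1-qL$, which need not hold; so handling the two terms jointly (pointwise, before taking the expectation) is the crux of the argument.
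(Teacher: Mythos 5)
Your proof is correct, and it differs from the paper's in ways worth noting. For the upper bound in \eqref{eq:h:bound}, the paper verifies that $g(\ell)=(1-q(1-q)^{\ell})^m$ is concave in $\ell$ (this is where $qm\le 1$ enters) and applies Jensen's inequality to reduce to $g(\alpha L)$, then finishes with Bernoulli's inequality and $(x+y)^m\le x^m+myx^{m-1}$; your pointwise linearization $g(\ell)-g(0)\le mq\bigl(1-(1-q)^{\ell}\bigr)\le m\ell q^2$ followed by linearity of expectation is more elementary, avoids Jensen and concavity altogether, and does not even need $qm\le 1$ for this part. For \eqref{eq:h:diff}, your identity $h_L-(1-\alpha)h_{L-1}=\alpha\,\mathbb{E}[g(\ell'+1)]$ with $\ell'\sim\mathrm{Binomial}(L-1,\alpha)$ is precisely the probabilistic form of the paper's combinatorial step (Pascal's identity $\binom{L}{\ell}-\binom{L-1}{\ell}=\binom{L-1}{\ell-1}$ performs the same re-indexing), so the decomposition coincides; what differs is the subsequent bounding. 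The paper first replaces $(1-q)^{\ell}$ by the uniform bound $1-qL$ inside the $m$-th power and then applies $(1-x)^m\le 1-mx/2$ to the deterministic quantity $x=q(1-qL)$, a step that implicitly requires $qL\le 1$ (true in the application, where $q=\theta/L$ with $\theta\le 1$, though not forced by the stated hypothesis $qm\le 1$). You instead expand $(1-x)^m\le 1-mx+\binom{m}{2}x^2$ pointwise with the random $x=q(1-q)^{\ell'+1}$, absorb the quadratic term via $(m-1)x\le mq\le 1$ to get $1-(1-x)^m\ge mx/2$, and only then take expectations, using $\mathbb{E}[x]\ge q\bigl(1-q(\alpha(L-1)+1)\bigr)\ge q(1-qL)$; as you observe, this chain remains (trivially) valid when $qL\ge 1$, so your handling of the second bound is, if anything, slightly more careful than the paper's. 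Both routes yield identical constants.
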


%\begin{rem} \label{rem:choiceofconstants}
%The total number of tests under this scheme are $\frac{c_1 c_2 (1+\dl)\LB \log(\N)}{\al} = O\left(\frac{\LB \log(\N)}{\al}\right)$. A possible choice of constants $\left(c_1, c_2, \tht, \epsilon,\eta,\cq = \frac{\tht}{\LB}\right)$ to achieve this bound that we use in this paper are given by $\tht = 0.15$, $c_1 = \frac{1}{0.15}$, $\cq = \frac{0.15}{\LB}$, $c_2 = 62$, $\eta = 1$ and $\epsilon={0.15\al}$.
%\end{rem}

\begin{proof}[Proof of Theorem~\ref{thm1}]
%In the proof of theorem~\ref{thm1}, we use upper and lower bounds for $\hh_{\LB}$ which are presented in the following proposition. 

There are two types of error associated with the $(d,\epsilon)$-threshold decoder: (i) a \emph{false alarm}, where a worker $w\in [n]\setminus \Ls$ is identified as unreliable, and (ii) a \emph{mis-detection} error, where an unreliable worker $w\in \cL$ is identified as reliable. Correspondingly, we define ${\Pp(w):= \P\left(I_w \geq d| w \notin \Ls\right)}$ and ${\Pn(w) = \P\left(I_w < d| w \in \Ls\right)}$, where $d$ is the decoder parameter which is set to 
\begin{align}\label{eq:d}
    d:=(1+\eta) \mf = (1+\eta)Z (\hh_L- (1-\epsilon)(1-q)^m),
\end{align}
where $\eta>0$ will be determined later. 
Since there are $\LB$ unreliable and $\N-\LB$ reliable workers, using the union bound, 
the total error probability is {upper} bounded by 
\begin{align} \label{eq:errors}
\Pe \leq \sum_{w\in [n]\setminus \Ls} \Pp(w) + \sum_{w\in \Ls} \Pn(w).
%\mathsf{P}_{\text{rel, total}} \leq (\N-\LB) \mathsf{P}_{\text{rel}}= (\N-\LB) \mathbb{P}\left(I_w \geq d| w \notin \Ls\right), \nonumber \\
%    &\mathsf{P}_{\text{unrel, total}} \leq \LB \mathsf{P}_{\text{unrel}} = \LB \mathbb{P}\left(I_w \geq d| w \notin \Ls\right), \nonumber \\
%    &\mathsf{P}_{\text{total}} \leq \mathsf{P}_{\text{unrel, total}}+\mathsf{P}_{\text{rel, total}}. 
\end{align}
In the following, we will bound both probabilities of error for the regime of parameters of interest. We start with 
\begin{align}\label{eq:Pp}
    &\Pp(w) \!=\! \mathbb{P}\left(I_w \! \geq\! d \bigm| w \notin \Ls \right)
    \stackrel{ }{=}\mathbb{P}\left(I_w \!\geq\! (1+\eta) \mf\bigm| w \notin \Ls \right) \nonumber \\
%    &\stackrel{\rm{(a)}}{=}  \mathbb{P}\!\left(I_w \geq (1+\eta)\mathbb{E}\left[I_{w}|w \notin \Ls \right] \bigm| w \notin \Ls \right) \nonumber\\
    &\hspace{-1pt}\stackrel{\rm{(a)}}{\leq} \hspace{-1pt}\exp\left(\hspace{-1pt}-\frac{\eta^2 }{2\hspace{-1pt}+\hspace{-1pt}\eta}\mf\hspace{-1pt}\right)  
    %\nonumber\\&
    \hspace{-2pt}= \hspace{-2pt}\exp\left(-\frac{\eta^2}{2\hspace{-1pt}+\hspace{-1pt}\eta} Z\left( \hh_L \hspace{-1pt}-\hspace{-1pt} (1\hspace{-1pt}-\hspace{-1pt}\epsilon)\left(1\hspace{-1pt}-\hspace{-1pt}\cq\right)^m\right)\hspace{-1pt}\right) \nonumber \\
     &
     \stackrel{\rm{(b)}}{\leq} \exp\left(-\frac{\eta^2}{2+\eta} Z \epsilon\left(1-\cq\right)^m\right) \nonumber\\
     &\stackrel{\rm{(c)}}{\leq} \exp\left(\!-\frac{\eta^2}{2\!+\!\eta} \theta \exp\left(\!-\frac{qm}{1\!-\!q}\right)Z\alpha\right)
     %\nonumber\\
%     &\exp\left(-\frac{\eta^2 Z \epsilon\left(1-\frac{\tht}{\LB}\right)^{c_1 \LB}}{2+\eta}\right)\nonumber \\
%    &\stackrel{\rm{(e)}}{\leq} \exp\left(-\frac{\eta^2 Z \epsilon\exp(-c_1 \tht) \left(1-\frac{(c_1 \tht)^2}{c_1\LB}\right)}{2+\eta}\right) \nonumber \\
%    &\stackrel{\rm{(f)}}{\leq} \exp\left(-\frac{\eta^2 Z \epsilon\exp(-c_1 \tht) \left(1-\frac{(c_1 \tht)^2}{2c_1}\right) }{(2+\eta)}\right)
 \!\stackrel{\rm{(d)}}{=} \!\exp\left(-\zeta_1 Z \alpha \right), 
\end{align}
%{\color{green}where~$\rm{(a)}$ holds with $\delta_1:=\frac{d-\mf}{\mf}=\frac{1}{2}$;} 
%for follows by setting $d = (1+\eta) \mu_{+}$;}
where: $\rm{(a)}$ is due to the Chernoff bound; in~$\rm{(b)}$ we used Proposition~\ref{prop:bounds_hL}; $\rm{(c)}$ follows from the facts that $\epsilon= \theta \alpha$, and $1-x \geq \exp\left(-\frac{x}{1-x}\right)$ which holds for $x<1$; and  in $\rm{(d)}$, we used the choice of parameters $(q,m)=\left(\frac{\theta}{L}, \frac{L}{  \theta}\right)$ %{with $\theta=0.2$,} 
which yields $1-q\geq 1-\theta$ and
\[
\zeta_1 = \frac{\eta^2}{2+\eta} \theta \exp\left(-\frac{1}{1-q}\right)  \geq  \frac{\eta^2}{2+\eta} \theta \exp\left(-\frac{1}{1-\theta}\right). 
%{0.005 ,}
%{0.02.}
\]
%$\zeta_1 = $$$ in which $qm=1$, $1-q\geq 1-\theta = 0.85$, and hence $\zeta_1 = \frac{1}{3}0.15 \exp(-1/0.85) \approx 0.0154$.
Similarly, for $\Pn(w)$ we can write
\begin{align}\label{eq:P_:1}
    \Pn(w) &= \P(I_w < d | w\in \cL) = \P(I_w < (1-\delta) \mm | w\in \cL) \nonumber\\
    &\leq \exp\left(-\frac{1}{2}\delta^2 \mm\right),
\end{align}
where $\delta := \frac{\mm - d}{\mm}$. %From~Proposition~\ref{prop:expected_scores} we have that
We note that
\begin{align}\label{eq:P_:2}
    &\mm - d = \mm - (1+\eta)\mf\nonumber\\
    &= Z \left[\alpha -  \eta(\hh_L - (1-\epsilon) (1-q)^m) -( \hh_\LB- (1-\alpha) \hh_{\LB-1})\right]  \nonumber\\
    &\stackrel{\rm{(e)}}{\geq} Z \left[\alpha -\eta (\epsilon (1-q)^m + mLq^2 \alpha) - \alpha \left(1-\frac{mq(1-qL)}{2}\right) \right]\nonumber\\
    &\stackrel{\rm{(f)}}{=} Z\alpha [-\eta \theta (1-q)^m - \eta mLq^2 + mq/2 -mLq^2/2] \nonumber\\
    &= \zeta_2 Z\alpha ,
%    &\stackrel{\rm{(f)}}{\geq} Z [(mq \alpha -  \exp(-mq) \epsilon) - \frac{3}{2}mLq^2 \alpha ]\nonumber\\ 
%    &\stackrel{\rm{(g)}}{=} Z \alpha [1  - 0.15 \exp(-1) ) - \frac{3}{2}0.15 ] =  \zeta_2 Z\alpha 
\end{align}
where $\rm{(e)}$ follows from  Proposition~\ref{prop:bounds_hL}, and in~$\rm{(f)}$ we plugged in $\epsilon=\theta \alpha$. Note that 
\begin{align*}
    \zeta_2 \hspace{-0pt}&=\hspace{-0pt} \frac{mq}{2} \hspace{-0pt}-\hspace{-1pt}\eta \theta (1\hspace{-0pt}-\hspace{-0pt}q)^m \hspace{-0pt}-\hspace{-0pt} \Big(\hspace{-0pt}\eta\hspace{-0pt}+\hspace{-0pt}\frac{1}{2}\hspace{-0pt}\Big)mLq^2 
    \nonumber\\ &   \hspace{-0pt}\stackrel{\rm{(g)}}{\geq}\hspace{-0pt} \frac{1}{2} \hspace{-0pt}-\hspace{-0pt} \theta\Big(\hspace{-0pt} {\eta \exp(-1)} \hspace{-0pt}+\hspace{-0pt} \Big(\hspace{-0pt}\eta\hspace{-0pt}+\hspace{-0pt}\frac{1}{2}\hspace{-0pt}\Big)\hspace{-0pt}\Big)\hspace{-0pt}, 
\end{align*}
where in~$\rm{(g)}$ we used the facts that $(q,m)=\left(\frac{\theta}{L}, \frac{L}{\theta}\right)$ and ${(1-q)^m \leq \exp(-mq)=\exp(-1)}$. 

%, the equality in~$\rm{(g)}$ holds for $(m,q,\epsilon,{\eta})= (\frac{L}{0.15},  \frac{0.15}{L}, 0.15 \alpha, {1})$, which leads to ${\zeta_2\approx 0.7198}$. 

Moreover, we have that
\begin{align}\label{eq:P_:3}
    \mm &= Z[\alpha + (1-\alpha) \hh_{\LB-1} - (1-\epsilon) (1-q)^m] \nonumber\\
    & \stackrel{\rm{(h)}}{\leq} Z[\alpha + \epsilon (1-q)^m + mLq^2 \alpha ] =  \zeta_3 Z\alpha,
\end{align}
where the inequality in~$\rm{(h)}$ follows from the chain of inequalities ${(1-\alpha)\hh_{\LB-1} \leq \hh_{\LB-1} \leq \hh_{\LB} \leq (1-q)^m+mLq^2 \alpha}$, which is implied by Proposition~\ref{prop:bounds_hL}. 
Note that since $\epsilon=\theta \alpha$, we have 
\begin{align*}
    \zeta_3 = 1+\theta (1-q)^m +mLq^2 \stackrel{\rm{(i)}}{\leq} 1+ \theta\exp(-1)+\theta,
\end{align*}
where $\rm{(i)}$ follows since $(q,m)=\left(\frac{\theta}{L}, \frac{L}{  \theta}\right)$ and from the fact that ${(1-\cq)^m \leq \exp(-m\cq)=\exp(-1)}$. Therefore, plugging~\eqref{eq:P_:2} and~\eqref{eq:P_:3} into~\eqref{eq:P_:1}, we get
\begin{align}\label{eq:Pn}
    \Pn(w) &\leq \exp\left(-\frac{1}{2} \frac{(\mm - d)^2}{\mm}\right) \nonumber\\
    &\leq \exp \left(-\frac{1}{2} \frac{\zeta_2^2 Z^2\alpha^2}{\zeta_3 Z \alpha}\right) = \exp(-\zeta_4 Z \alpha),
\end{align}
where $\zeta_4 = \zeta_2^2/2\zeta_3$. Now,  setting $(\theta,\eta)=(0.15, 1)$, we have $\zeta_1,\zeta_4> \zeta:=0.015$. %It is also worth noting that $\zeta_2>0$. 
Plugging~\eqref{eq:Pp} and~\eqref{eq:Pn} in~\eqref{eq:errors}, we get
\begin{align*}
    \Pe &\leq (n-\LB)\exp\left( -\zeta_1 Z\alpha\right) + \LB \exp\left(-\zeta_4 Z\alpha\right)\nonumber\\
%    &< n \exp\left( -\frac{1}{30} Z\alpha\right) + n \exp\left(-0.015 Z\alpha\right)\nonumber\\
    & \stackrel{\rm{(j)}}{{\leq}} n \exp\left( -\zeta \frac{\lambda \log n}{ \alpha}\alpha\right)= \exp\left( - \left(\zeta\lambda-1\right) \log n \right)
    \stackrel{\rm{(k)}}{=} n^{-\beta},
\end{align*}
where $\rm{(j)}$ holds 
for $\zeta_1, \zeta_4 >\zeta$ and 
 $Z=\frac{\lambda\log n}{\alpha}$, and~$\rm{(k)}$   follows from the fact that $\lambda=(1+\beta)/\zeta$. 
Note that the total number of tests is 
%\begin{align*}
   $ mZ = \frac{L}{\theta} \frac{\lambda \log n}{\alpha} < 450 (1+\beta) \frac{L \log n}{\alpha}$.
%\end{align*}
This completes the proof {of} Theorem~\ref{thm1}.
\end{proof}

\section{Distributed Scheme}\label{section:distributedScheme}
In this section, we present the distributed computing scheme by dividing the scheme into the following three subsections.
\subsection{Generator Matrix for Encoding}
To obtain the generator matrix for encoding, we start with the group testing contact matrix $\cM \in \{0,1\}^{M \times \N}$ that we obtained in  Section~\ref{section:group_testing}, where $M=mZ$. We design a random parity matrix $\pM\in \mathbb{F}^{M\times n}$, given by $\pM_{i,j} = \cM_{i,j} \cdot \sQ_{i,j}$, where the entries of $\sQ\in \mathbb{F}^{M\times n}$ are chosen uniformly and independently at random from $\mathbb{F}\setminus\{0\}$. 
%is random matrix 
%\begin{align}\label{eq:parityCheckFromContact}
%    \pM_{i,j} = \begin{cases}
%        0 &\text{ if, } \cM_{i,j}=0 \\
%        \text{Random}\left(\mathbb{F}^*\right) &\text{ if, } \cM_{i,j}=1, 
%    \end{cases}
%\end{align}
%where, $\text{Random}\left(\mathbb{F}^*\right)$ represents an element that is uniformly randomly chosen from the multiplicative field ${\mathbb{F}^* = \mathbb{F} \setminus \{0\}}$. 
%As an example, if we are operating in $\mathbb{F}_5$, the parity check matrix $\pM$ corresponding to the contact matrix $\cM$ in~\eqref{ex1} could be given by
%\begin{equation}
%    \pM \!=\! \begin{bmatrix}
%2 & 0 & 0 & 4 & 0\\
%4 & 3 & 1 & 0 & 0 \\
%1 & 2 & 0 & 2 & 0
%\end{bmatrix}.
%\end{equation}
We will use the linear code induced by $\pM$ for pre-coding of the matrix $\mathbf{B}$. To this end, let $\sG\in\mathbb{F}^{k\times n}$ be the generator matrix of the \emph{systematic} code induced by the parity-check matrix $\pM$, that is, 
\begin{align*}
    \sG=\begin{bmatrix} \mathbf{I}_{k\times k} \big| \mathbf{R}_{k\times (n-k)} \end{bmatrix},
\end{align*}
for some matrix $\mathbf{R} \in \mathbb{F}^{k \times (n-k)}$ where\footnote{We note that in general it is rather unlikely, but possible, that $\pM$ is not full-rank.  In spite of that, we can always find  $k= n-M$ linearly independent vectors in $\mathbb{F}^n$ which are orthogonal to the rows of $\cM$.} %to use a. However, this is a very low probability event when $|\mathbb{F}|$ is large.}
$k=n-M$ and $\sG$ satisfies $\pM \cdot \sG^T =\mathbf{0}$.

%We let $\mathcal{N}_{\pM} \subseteq \mathbb{F}^\N$ be the null-space $\pM$ with dimension $k \geq \N - M$. Let $\{\bm{u}_1, \bm{u}_2, \dots, \bm{u}_{k}\} \in \mathbb{F}^\N$ be the set of basis elements for $\mathcal{N}_{\pM}$. Then, we define the matrix $\tilde{\mathsf{G}} \in \mathbb{F}^{(\N-M) \times \N}$ which contains $\{\bm{u}_1^T, \bm{u}_2^T, \dots, \bm{u}_{k}^T\}$ as its rows, and is given as follows
%\begin{equation}
%    \tilde{\mathsf{G}} = \begin{bmatrix}
% \bm{u}_1 \quad \bm{u}_2 \quad \cdots \quad \bm{u}_k
%\end{bmatrix}^T.
%\end{equation}
%Finally, the generator matrix $\mathsf{G} \in \mathbb{F}^{(\N-M) \times \N}$ is the row-reduced echelon form (RREF) of the matrix $\tilde{\mathsf{G}}$. Without loss of generality, we can assume that $\mathsf{G}$ is of the form
%\begin{equation}
%    \mathsf{G} = \begin{bmatrix}
%        I_{k \times k } \quad R_{k\times (\N-k) }
%    \end{bmatrix},
%\end{equation}
%for some matrix $R_{k\times R_{k\times (\N-k) }} \in \mathbb{F}^{k \times (\N-k)}$, because if it is not of the above form then it can be converted to the above form by shuffling its columns, which is equivalent to renaming the worker nodes. Moreover, since the rows of $\mathsf{G}$ belong to the null-space of $\pM$, we have that
%\begin{equation}\label{eq:orthogonal}
%    \pM \cdot \mathsf{G}^T = \mathbf{0}.
%\end{equation}
Next, we use the matrix $\mathsf{G}$ for encoding  the original matrix $\mat{t}\in\mathbb{F}^{r\times c}$. To this end, $\mat{t}$ is first divided horizontally into $k$ equal parts $\mat{t}_1, \mat{t}_2, \cdots, \mat{t}_k$. Then, for  $w \in [\N]$ we generate  the matrix $\mathbf{W}^{(w)} \in \mathbb{F}^{s\times c}$ (where $s=r/k$), given by
\begin{equation}
    \mathbf{W}^{(w)} = \sum\nolimits_{j=1}^{k} \mathsf{G}_{j,w} \mat{t}_j, 
\end{equation}
and send it to worker $w$, who is responsible for computing the product ${\mathbf{a}_t^{(w)} = \mathbf{W}^{(w)}\cdot \vect{t}}$. 

\subsection{Group Testing for Identifying the Unreliable Workers}
Worker $w$ will return the result $\tilde{\mathbf{a}}_t^{(w)}$ {in~\eqref{eq:z-channel},} which may or may not be equal to the expected result $\mathbf{a}_t^{(w)}$,  depending on whether $w \in \Lst{t}$ or not (i.e., worker $w$ is attacked in time-slot $t$ or not;  see~\eqref{eq:z-channel}). The server can not determine if $\tilde{\mathbf{a}}_t^{(w)}$ is correct or incorrect only from the information provided by worker $w$. However, the set of correct results $\{\mathbf{a}_t^{(w)}:w \in [\N]\}$ satisfy $M$ parity equations, corresponding to the $M$ rows of the parity-check matrix $\pM$. To formalize this, we define a parity check function as follows.\\
\textbf{Parity Check Function}: Consider some $i \in [M]$ and a set of workers $\cU_i = \supp\left(\pM_{i,:}\right)$ with the corresponding results $\{\tilde{\mathbf{a}}_t^{(j)}\hspace{-1pt}: \hspace{-1pt}j \hspace{-1pt}\in\hspace{-1pt} \cU_i\}$. We define the parity check function $\Gamma_t\left(\cU_i\right)$ as
\begin{equation}\label{eq:parity-check-function}
   \Gamma_t(\cU_i) :=  \sum_{j\in \cU_i} \pM_{i,j} \tilde{\mathbf{a}}_t^{(j)},
\end{equation}
which can be used to check if there is any attacked (and hence unreliable) worker in the set $\cU_i$ that returned an incorrect result in time-slot $t\in [T]$. The following lemma, the proof of which is in Appendix~\ref{app:proof_lemma1}, states important properties of $\Gamma_t(\cdot)$. 
%
%shows how the parity check function $\Gamma_t(\cdot)$ can be used to identify if there were one or more unreliable attacked workers in the support sets $\cU_i$ of the rows of $\pM$ at time-slot $t$.

\begin{lemma} \label{Lemma:paritycheckfunction}
In the finite field $\mathbb{F}$, for every $i \in [M]$ and set $\cU_i = \supp(\pM_{i,:})$,  we have \begin{align}\label{eq:parity-cond}
\begin{split}
   &\P[ \Gamma_t\left(\cU_i\right)=\mathbf{0} | \cU_i \cap \Lst{t}  = \varnothing] = 1,\\
   &\P[ \Gamma_t\left(\cU_i\right)=\mathbf{0} | \cU_i \cap \Lst{t}  \neq \varnothing] = \frac{1}{|\mathbb{F}|}.
   \end{split}
%   \begin{cases}
%        = \mathbf{0} &\text{ if, } \Lst{t}\cap \cU_i = \varnothing,\\
%        \neq \mathbf{0} \text{ w.p. } \geq 1-\frac{1}{|\mathbb{F}|} &\text{ if, }  \Lst{t}\cap \cU_i \neq \varnothing,\\
%        = \mathbf{0} \text{ w.p. } \leq \frac{1}{|\mathbb{F}|} &\text{ if, }  \Lst{t}\cap \cU_i \neq \varnothing.
%    \end{cases}
\end{align}
\end{lemma}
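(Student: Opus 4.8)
The plan is to express the parity check purely in terms of the noise injected by the \emph{attacked} workers inside $\cU_i$, exploiting the orthogonality $\pM\,\sG^{T}=\mathbf{0}$ that is built into the code construction. First I would verify that the noiseless results lie in the kernel of each parity row. Using $\mathbf{W}^{(w)}=\sum_{j=1}^{k}\sG_{j,w}\mat{t}_j$ and $\mathbf{a}_t^{(w)}=\mathbf{W}^{(w)}\vect{t}$, the $i$th parity of the correct results is $\sum_{w}\pM_{i,w}\mathbf{a}_t^{(w)}=\sum_{j=1}^{k}\big(\pM\,\sG^{T}\big)_{i,j}\,\mat{t}_j\vect{t}=\mathbf{0}$. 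Since $\pM_{i,w}=0$ for $w\notin\cU_i=\supp(\pM_{i,:})$, restricting the sum to $\cU_i$ changes nothing, so $\sum_{j\in\cU_i}\pM_{i,j}\mathbf{a}_t^{(j)}=\mathbf{0}$.

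Next I would substitute the returned results $\tilde{\mathbf{a}}_t^{(j)}=\mathbf{a}_t^{(j)}+\mathbf{1}\{j\in\Lst{t}\}\,\mathbf{z}_t^{(j)}$ from~\eqref{eq:z-channel} into the definition~\eqref{eq:parity-check-function}. The noiseless part cancels by the previous step, leaving the single identity $\Gamma_t(\cU_i)=\sum_{j\in\cU_i\cap\Lst{t}}\pM_{i,j}\,\mathbf{z}_t^{(j)}$, which drives both claims. If $\cU_i\cap\Lst{t}=\varnothing$, the sum is empty, so $\Gamma_t(\cU_i)=\mathbf{0}$ deterministically and the first probability equals $1$.

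For the second claim, when $\cU_i\cap\Lst{t}\neq\varnothing$, I would single out one attacked worker $j^{\star}\in\cU_i\cap\Lst{t}$ and condition on the noise of every other worker, writing $\Gamma_t(\cU_i)=\pM_{i,j^{\star}}\,\mathbf{z}_t^{(j^{\star})}+\mathbf{c}$, where $\mathbf{c}$ is a constant independent of $\mathbf{z}_t^{(j^{\star})}$; here the non-colluding assumption (independence of the noise across workers) is exactly what lets me freeze the other terms into $\mathbf{c}$. Because $j^{\star}\in\cU_i$ forces $\pM_{i,j^{\star}}\neq 0$, the affine map $\mathbf{z}_t^{(j^{\star})}\mapsto\pM_{i,j^{\star}}\mathbf{z}_t^{(j^{\star})}+\mathbf{c}$ is a bijection of the field, so for uniform noise $\Gamma_t(\cU_i)$ is uniform and hits $\mathbf{0}$ with probability $1/|\mathbb{F}|$; averaging over the conditioning leaves this value unchanged.

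The main obstacle is this probabilistic step: I must (i) expose exactly one uniformly distributed degree of freedom — the noise of a single attacked worker, whose multiplier $\pM_{i,j^{\star}}$ is guaranteed non-zero — and (ii) invoke independence across workers so the other attacked contributions can be absorbed into $\mathbf{c}$ without disturbing uniformity. Everything else is linear-algebra bookkeeping that follows immediately from $\pM\,\sG^{T}=\mathbf{0}$; the only modeling point to pin down is that the exact constant $1/|\mathbb{F}|$ relies on the attacked worker's injected noise being uniform over the field.
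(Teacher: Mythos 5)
Your proof is correct and takes essentially the same route as the paper's: cancel the noiseless part using $\pM \cdot \sG^{T}=\mathbf{0}$, reduce $\Gamma_t(\cU_i)$ to $\sum_{j\in \cU_i\cap\Lst{t}}\pM_{i,j}\mathbf{z}_t^{(j)}$, conclude probability $1$ when the intersection is empty, and use non-collusion plus uniformity of the injected noise to get $1/|\mathbb{F}|$ otherwise (your conditioning-and-bijection argument is just a slightly more explicit version of the paper's independence argument). The one caveat---which you yourself flag as a modeling point---is that $\mathbf{z}_t^{(j^{\star})}$ lives in $\mathbb{F}^{s\times 1}$ rather than $\mathbb{F}$, and the paper resolves the resulting $1/|\mathbb{F}|$ versus $1/|\mathbb{F}|^{s}$ question with a footnote noting that an attacked worker may corrupt only a single entry of its returned vector.
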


%\begin{rem}\label{rem:deterministic}
    %In this work, we make a simplifying assumption that $|\mathbb{F}| \geq \frac{450 (1+\beta)\N^\beta \LB \log(\N)}{\al}$. This is because by lemma~\ref{Lemma:paritycheckfunction}, each test of group testing has a probability less than $\frac{1}{|\mathbb{F}|}$ of returning incorrect result. Therefore, by union bound, the probability that the result of at least one out of $M$ tests was incorrect is less than $\frac{M}{|\mathbb{F}|}\leq \frac{M\al}{450 (1+\beta)\N^\beta \LB \log(\N)} \leq \N^{-\beta}$, using the value of $M$ from theorem~\ref{thm1}. This makes the overall contribution to the error arising from the probabilistic nature of the parity-check function in lemma~\ref{Lemma:paritycheckfunction} to be less than $\N^{-\beta}$.
%\end{rem}\todo[inline]{SM: this doesn't make any sense. Please remove this, and add $1/|\mathbb{F}|$  to the error. With that, the page limit should be met. } 

We now discuss the use of the parity check function to identify the set of unreliable workers.

\noindent\textbf{Probabilistic Group Testing:} In time-slot $z \in [Z]$, the server computes $\hat{\bm{y}}_{j}$ for $j \in \mathcal{T}_z={[(m-1)z+1:mz]}$, (where the parameters $Z$ and $m$ are specified in Section~\ref{section:group_testing}) as follows,
\begin{align}\label{eq:result}
    \hat{\bm{y}}_j = \begin{cases}
        0 & \text{ if }\Gamma_z\left(\cU_j\right) = \mathbf{0}, \\
        1 & \text{ otherwise},
    \end{cases}
\end{align}
%\todo{SM: check this. I think should be the opposite!}
\noindent where $\cU_j = \supp\left(\pM_{j,:}\right) = \supp\left(\cM_{j,:}\right)$, since $\cM_{j,w}=0$ if and only if $\pM_{j,w}=0$. Hence, the parity equation on $\cU_j$ is equivalent to a group test on $\cU_j$. However, due to~\eqref{eq:parity-cond}, we get
%the server is essentially performing the exact same tests dictated by the group testing contact matrix $\cM$. Moreover, by the probabilistic nature of the test in lemma~\ref{Lemma:paritycheckfunction}, 
\[\mathbb{P}\left({\hat{\bm{y}} = \bm{y}}\right) = \mathbb{P}\left({\hat{\bm{y}} = \sM \odot \bm{x}}\right) \geq \left(1-{1}/{|\mathbb{F}|}\right)^M,\]
or equivalently, $\mathbb{P}\left({\hat{\bm{y}} \neq \bm{y}}\right) \leq 1-\left(1-{1}/{|\mathbb{F}|}\right)^M\leq \frac{M}{|\mathbb{F}|}$. Therefore, the server can use the \mbox{$(d,\epsilon)$-threshold} decoder to identify all the unreliable workers $\Ls$ with an error probability less than $\N^{-\beta} + \frac{M}{|\mathbb{F}|}$, where the $\N^{-\beta}$ and $\frac{M}{|\mathbb{F}|}$ represent the upper bounds on the errors due to probabilistic group testing and the probabilistic nature of the parity-check function, respectively.

\begin{rem}
    Here, we need to choose  a finite field with $|\mathbb{F}| \gg M$ to guarantee the success of the algorithm with high probability. However, using a more sophisticated analysis for group testing, we can consider the underlying model similar to~\cite{Atia2009}, in which the results of the tests are passed through a $Z$-channel, i.e., $\P(\hat{\bm{y}}_j=0 | \bm{y}_j=1)= 1/|\mathbb{F}|$. %This leads to a much milder condition on $|\mathbb{F}|$. However, due to the page constraint, that analysis is beyond the scope of this paper.  
\end{rem}

\noindent \textbf{Average Computational Cost of Identifying Unreliable Workers: } On average, each row of $\pM$ contains $\cq \N = \frac{\tht \N}{\LB}$ non-zero elements and $\tilde{\mathbf{a}}_t^{(w)}$ has $s = \frac{r}{k}$ elements. Therefore, the computational cost of $\Gamma_t(\cU_j)$ for $j \in [M]$ is ${O\left(\frac{\tht \N r}{k\LB}\right)=O\left(\frac{\N r}{k\LB}\right)}$. Moreover, a total of $M = O\left(\frac{\LB\log(\N)}{\al}\right)$ such tests are performed. Therefore, the total cost of identifying the unreliable workers is $O\left(\frac{r \N \log(\N) }{k\al}\right)$. Note that this does not scale with $T$, the number of vectors to be multiplied by $\mathbf{B}$. Moreover, for the interesting regime where $\frac{L\log n}{\al} = o(n)$, we have $M=o(n)$ and hence $k=n-M=\Theta(n)$. Consequently, in this regime, the total cost of identifying the unreliable workers will be $O\left(\frac{ r \log(\N) }{\al}\right)$.
\subsection{Reconstruction of Incorrect Results and Decoding}
Note that after performing group testing, the server knows the set of unreliable workers $\Ls$ with a probability of error less than $\N^{-\beta} + \frac{M}{|\mathbb{F}|}$. Since the generator matrix is of the form ${\mathsf{G} = \begin{bmatrix} \mathbf{I}_{k\times k} \big| \mathbf{R}_{k\times (n-k)} \end{bmatrix}}$, the first $k$ workers received matrices $\{\mathbf{B}_1, \mathbf{B}_2, \cdots, \mathbf{B}_k\}$ and were expected to compute ${\{{\mathbf{a}}_t^{(1)} = \mathbf{B}_1 \cdot \vect{t}, {\mathbf{a}}_t^{(2)} = \mathbf{B}_2 \cdot \vect{t}, \cdots, {\mathbf{a}}_t^{(k)} = \mathbf{B}_k \cdot \vect{t}\}}$ in time-slot $t$. Now, if the server was lucky and all of the first $k$ workers were reliable, that is,  $\Ls \cap [k] = \varnothing$, then $\tilde{\mathbf{a}}_t^{(w)}= {\mathbf{a}}_t^{(w)}$ for all $w \in [k]$ and therefore the correct matrix-vector product $\mat{t}\cdot \vect{t}$ is directly given by the result of the first $k$ workers, without any extra decoding computation cost at the server. However, if one or more of these $k$ workers were unreliable, that is, $\Ls \cap [k] \neq \varnothing$, then $\tilde{\mathbf{a}}_t^{(w)}$ may not be equal to ${\mathbf{a}}_t^{(w)}$ for $w \in \Ls \cap [k]$. In the following, we show that the server can reconstruct the correct answers ${\mathbf{a}}_t^{(w)}$ for all the unreliable workers $w \in \Ls$ with vanishing probability of error.
    Towards this end, we first define a reconstruction criterion as follows.

\noindent\textbf{Reconstruction Criterion}: \textit{For any worker $w\in \Ls$, the parity matrix $\pM$ satisfies the reconstruction criterion if $\pM$ has a row $i$ whose support ${\cU_i = \supp\left(\pM_{i,:}\right)}$ has the following properties: (i) $\cU_i \cap \Ls = w$ and (ii) ${\cU_i\cap \left([\N]\setminus \Ls \right)\neq \varnothing}$.}

The following lemma shows that if $\pM$ satisfies the reconstruction criterion for a worker $w\in \Ls$, then the server can reconstruct the correct result ${\mathbf{a}}_t^{(w)}$ of $w$. The proof of the lemma is presented in  Appendix~\ref{app:proof_lemma2}.
 \begin{lemma} \label{corollary:reconstruction}
 If the parity matrix $\pM$ satisfies the reconstruction criterion for a worker $w\in \Ls$,  then the correct result ${\mathbf{a}}_t^{(w)}$ of $w$ can be reconstructed as follows,
        \begin{align*}
              {\mathbf{a}}_t^{(w)} = \left(\pM_{i,w}\right)^{-1}\left(\pM_{i,w} \tilde{\mathbf{a}}_t^{(w)}-\Gamma_t\left(\cU_i\right)\right),
        \end{align*}
        where $i\in[M]$ is a row for which the two conditions of the reconstruction criterion are satisfied and  $\cU_i = \supp(\pM_{i,:})$.
    \end{lemma}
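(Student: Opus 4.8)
The plan is to reduce the claimed reconstruction identity to a single parity relation satisfied by the \emph{correct} (noiseless) results $\{\mathbf{a}_t^{(w)}\}$, and then to exploit the fact that, under the reconstruction criterion, every worker in $\cU_i$ other than $w$ is reliable and hence returns its true value.

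First I would establish the fundamental parity relation for the true results. Since $\mathbf{W}^{(w)} = \sum_{j=1}^{k} \sG_{j,w} \mat{t}_j$ and the generator satisfies $\pM \cdot \sG^T = \mathbf{0}$, for any row $i \in [M]$ we have $\sum_{w=1}^{\N} \pM_{i,w} \mathbf{W}^{(w)} = \sum_{j=1}^{k} \mat{t}_j\, (\pM \sG^T)_{i,j} = \mathbf{0}$. Right-multiplying by $\vect{t}$ and recalling $\mathbf{a}_t^{(w)} = \mathbf{W}^{(w)} \cdot \vect{t}$ gives $\sum_{w \in \cU_i} \pM_{i,w}\, \mathbf{a}_t^{(w)} = \mathbf{0}$, where the sum is restricted to $\cU_i = \supp(\pM_{i,:})$ because $\pM_{i,w} = 0$ off the support. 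This is the crux of the argument; everything else is bookkeeping.

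Next I would isolate the term for $w$. Writing the parity relation as $\pM_{i,w}\, \mathbf{a}_t^{(w)} = -\sum_{j \in \cU_i \setminus \{w\}} \pM_{i,j}\, \mathbf{a}_t^{(j)}$, I would invoke property (i) of the reconstruction criterion, $\cU_i \cap \Ls = \{w\}$, which guarantees that every $j \in \cU_i \setminus \{w\}$ is reliable, so $\tilde{\mathbf{a}}_t^{(j)} = \mathbf{a}_t^{(j)}$; property (ii) ensures this set is nonempty, so the reconstruction is nondegenerate. Substituting the returned values for these reliable workers yields $\pM_{i,w}\, \mathbf{a}_t^{(w)} = -\sum_{j \in \cU_i \setminus \{w\}} \pM_{i,j}\, \tilde{\mathbf{a}}_t^{(j)} = \pM_{i,w}\, \tilde{\mathbf{a}}_t^{(w)} - \Gamma_t(\cU_i)$, where the last equality merely adds and subtracts the $j=w$ term of the definition $\Gamma_t(\cU_i) = \sum_{j \in \cU_i} \pM_{i,j}\, \tilde{\mathbf{a}}_t^{(j)}$. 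Multiplying through by $(\pM_{i,w})^{-1}$, which is well defined since $w \in \cU_i$ forces $\pM_{i,w} \in \mathbb{F}\setminus\{0\}$, gives exactly the claimed formula.

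The argument is essentially algebraic and I do not anticipate a genuine obstacle; the only point requiring care is the correct use of the two conditions. Property (i) certifies that all the ``other'' terms in the parity equation come from reliable workers whose returned values coincide with their true values, while property (ii) rules out the degenerate case $\cU_i = \{w\}$ (which would otherwise collapse the relation to $\pM_{i,w}\, \mathbf{a}_t^{(w)} = \mathbf{0}$). I would also state explicitly that passing from the matrix identity $\sum_{w} \pM_{i,w} \mathbf{W}^{(w)} = \mathbf{0}$ to the vector identity on the $\mathbf{a}_t^{(w)}$ is valid for every time-slot $t$, since it holds after right-multiplication by any $\vect{t}$.
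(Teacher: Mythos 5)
Your proof is correct and follows essentially the same route as the paper's: both hinge on the parity relation $\sum_{j\in\cU_i}\pM_{i,j}\mathbf{a}_t^{(j)}=\mathbf{0}$ (a consequence of $\pM\cdot\sG^T=\mathbf{0}$) together with criterion (i), which forces every $j\in\cU_i\setminus\{w\}$ to be reliable so that $\tilde{\mathbf{a}}_t^{(j)}=\mathbf{a}_t^{(j)}$, the only difference being that you work forward from the parity relation while the paper expands the claimed formula and simplifies. One small remark: criterion (ii) is not actually needed for the identity to hold (if $\cU_i=\{w\}$ the parity relation forces $\mathbf{a}_t^{(w)}=\mathbf{0}$ and the formula still returns the correct value), and indeed the paper's algebra, like yours, never invokes it.
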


In the next lemma, we show that $\pM$ satisfies the reconstruction criterion for all unreliable workers $w \in \Ls$ with vanishing probability of error. The proof of the lemma is presented in Appendix~\ref{app:proof_lemma3}.
    \begin{lemma}\label{Lemma:reconstruction}
       The parity matrix $\pM$ constructed from the group testing contact matrix $\cM$ satisfies the reconstruction criterion for all the unreliable workers $w \in \Ls$, with an error probability less than $\N^{-\beta}$.
    \end{lemma}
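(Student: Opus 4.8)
The plan is to reduce the statement to a property of the random support pattern of the contact matrix $\cM$ and then bound the relevant failure probability by a union bound over the $\LB$ unreliable workers. First I would observe that, since $\pM_{i,j}=\cM_{i,j}\cdot\sQ_{i,j}$ and every entry of $\sQ$ is a nonzero field element, we have $\supp(\pM_{i,:})=\supp(\cM_{i,:})$ for each row $i$. Hence whether a row witnesses the reconstruction criterion depends only on the $\{0,1\}$ support of $\cM$ and not on the choice of $\sQ$; all randomness relevant here is the i.i.d.\ Bernoulli($\cq$) structure of $\cM$, with the set $\Ls$ being a-priori fixed.

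Fix an unreliable worker $w\in\Ls$. For a row $i\in[M]$, let $E_{i,w}$ be the event that its support $\cU_i=\supp(\cM_{i,:})$ satisfies both conditions of the reconstruction criterion for $w$, namely (a) $\cM_{i,w}=1$, (b) $\cM_{i,w'}=0$ for every $w'\in\Ls\setminus\{w\}$, and (c) $\cM_{i,w''}=1$ for at least one $w''\in[\N]\setminus\Ls$. As the entries of $\cM$ are independent Bernoulli($\cq$), these sub-events are independent and
\[
\P(E_{i,w}) = \cq\,(1-\cq)^{\LB-1}\bigl(1-(1-\cq)^{\N-\LB}\bigr)=:p .
\]
Since distinct rows of $\cM$ use disjoint random variables, the events $\{E_{i,w}\}_{i\in[M]}$ are independent across $i$, so the probability that no row of $\pM$ witnesses the criterion for $w$ is $(1-p)^M\le e^{-Mp}$.

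Next I would lower-bound $p$ under the parameter choices $(\cq,m,Z)=(\tht/\LB,\ \LB/\tht,\ \lambda\log\N/\al)$ with $M=mZ$. By Bernoulli's inequality $(1-\cq)^{\LB-1}\ge 1-(\LB-1)\cq\ge 1-\tht$, and, provided the number of reliable workers satisfies $\N-\LB\ge\LB$ (which holds in the regime $\LB=o(\N)$ of interest), $1-(1-\cq)^{\N-\LB}\ge 1-(1-\cq)^{\LB}\ge 1-e^{-\cq\LB}=1-e^{-\tht}$. Thus $p\ge\frac{\tht}{\LB}(1-\tht)(1-e^{-\tht})$, and since $\al\le 1$,
\[
Mp=\frac{\LB}{\tht}\cdot\frac{\lambda\log\N}{\al}\cdot p\ \ge\ (1-\tht)(1-e^{-\tht})\,\frac{\lambda\log\N}{\al}\ \ge\ (1-\tht)(1-e^{-\tht})\,\lambda\log\N .
\]
A union bound over the $\LB\le\N$ unreliable workers then yields
\[
\P\bigl(\pM\text{ fails the criterion for some }w\in\Ls\bigr)\le \LB(1-p)^M\le \N\,e^{-Mp}\le \N^{\,1-(1-\tht)(1-e^{-\tht})\lambda}.
\]

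Finally I would verify that the exponent clears the target. With the values $(\tht,\eta)=(0.15,1)$ and $\lambda=(1+\beta)/\zeta$, $\zeta=0.015$, fixed in the proof of Theorem~\ref{thm1}, one has $(1-\tht)(1-e^{-\tht})\ge 0.11\gg\zeta$, so $(1-\tht)(1-e^{-\tht})\lambda\ge 1+\beta$ and the displayed bound is at most $\N^{-\beta}$, which proves the lemma. The main obstacle is the lower bound on $p$: the delicate factor is $1-(1-\cq)^{\N-\LB}$, the probability that the chosen row also contains a reliable worker, and keeping it bounded away from zero by a universal constant is exactly where the assumption $\N-\LB\ge\LB$ (i.e.\ $\LB=o(\N)$) enters; everything else is a routine Chernoff-type estimate followed by the union bound.
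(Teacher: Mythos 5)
Your proof is correct and follows essentially the same route as the paper's: the same per-row witness probability $\cq(1-\cq)^{\LB-1}\bigl(1-(1-\cq)^{\N-\LB}\bigr)$, independence across the $M$ rows, the assumption $\N\geq 2\LB$, and a union bound over the $\LB$ unreliable workers, with the exponent checked against $(1+\beta)\log \N$ using $\tht=0.15$ and $\lambda=(1+\beta)/\zeta$. The only difference is cosmetic: you lower-bound the witness probability via Bernoulli's inequality and $1-e^{-\tht}$, whereas the paper uses $\exp\left(-x/(1-x)\right)\leq 1-x\leq \exp(-x)$ to arrive at its constant $r_\tht$; both give constants far larger than needed.
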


Since the reconstruction criterion holds for every $w\in \Ls$ with a probability of error less than $\N^{-\beta}$, the server can reconstruct the results of all the unreliable workers with an error probability less than $\N^{-\beta} + \frac{M}{|\mathbb{F}|} +\N^{-\beta}= 2 \N^{-\beta}+\frac{M}{|\mathbb{F}|}$, where $\N^{-\beta} + \frac{M}{|\mathbb{F}|}$ and $\N^{-\beta}$ are upper bounds on the error probabilities for the identification of the unreliable workers and for the reconstruction of the correct results of the unreliable workers, respectively. Once the correct results are reconstructed, the server can use the first $k$ workers' results to obtain the matrix-vector product $\mat{t}\cdot \vect{t}$.

\noindent \textbf{Average Computational Cost of Decoding: } On average, each row of $\pM$ contains $\cq \N = \frac{\tht \N}{\LB}$ non-zero elements and $\tilde{\mathbf{a}}_t^{(j)}$ has $s = \frac{r}{k}$ elements. Therefore, the computational cost of reconstructing each unreliable worker's answer by Lemma~\ref{corollary:reconstruction} is $O\left(\frac{\tht \N r}{k\LB}\right)=O\left(\frac{\N r}{k\LB}\right)$. Moreover, the total cost of reconstructing all of the $\LB$ unreliable workers' results is $O\left(\frac{\N r}{k}\right)$. Once the unreliable workers' results are reconstructed, the matrix-vector product
{$\mat{t}\cdot\vect{t}$} is given by stacking the correct results of first $k$ workers, which has no additional cost. Therefore, the computational cost of decoding is $O\left(\frac{\N r}{k}\right)$.

\newpage 
\begingroup
\let\cleardoublepage\clearpage
 % Generated by IEEEtran.bst, version: 1.14 (2015/08/26)

\endgroup

\newpage

\newpage

\clearpage
\appendices

\section{Proof of Proposition~\ref{prop:expected_scores}} \label{app:proof_prop1}

%\begin{proof}[Proof of Proposition~\ref{prop:expected_scores}]

%For any worker $w \in [\N]$, %{\color{green}let us define $E_{w,1}$ to be the event that ${\cM_{\mathcal{T}_z,w} \preceq \bm{y}_z}$ and $E_{w,2}$ to be the event that ${\cM_{\mathcal{T}_z,w} \neq \mathbf{0}}$}. Firstly, the probability $\mathbb{P}\left(I_{w,z} = \epsilon| w \in [\N]\right)$ is given by
We start by noting that
\begin{align}\label{eq:Ieps}
    \mathbb{P}\left(I_{w,z}\hspace{0pt} = \hspace{-0pt}\epsilon\right) 
    %= \mathbb{P}\left(E_{w,2}^c|w \in [\N]\right)  \nonumber \\
     &=\mathbb{P}\left(\cM_{\mathcal{T}_z,w} = \mathbf{0}\right)\nonumber\\
     &= \prod_{j\in \mathcal{T}_z}\hspace{-1pt} \mathbb{P}\left(\cM_{j,w} = 0\right) = (1-\cq)^m. 
\end{align}
For a reliable worker $w \notin \Ls$, {we have that}
%the probability that the %identity
%{score}
%of this worker $I_{w,z}$ was $1$, given that $|\Lst{z}|=\ell$ workers were active in time-slot $z$ is
\begin{align}\label{eq:I1-notL:1} %\label{eq:reliable-condprob}
    \P&\left(I_{w,z} = 1\!\bigm|\! w \notin \Ls\right)  =\mathbb{P} \left(\cM_{\mathcal{T}_z,w}\!\preceq \ytz,  \cM_{\mathcal{T}_z,w} \neq \mathbf{0}
    \!\bigm|\! w \notin \Ls \right) \nonumber\\
    &  \stackrel{\rm{(a)}}{=} \P \left(\cM_{\mathcal{T}_z,w}\preceq \ytz \!\bigm|\! w \notin \Ls \right) -
    \P \left(\cM_{\mathcal{T}_z,w} = \mathbf{0}\!\bigm|\! w \notin \Ls \right)\nonumber\\
    & \stackrel{\rm{(b)}}{=} \sum_{\ell=0}^L \P \left(\cM_{\mathcal{T}_z,w}\preceq \ytz \!\bigm|\! |\cL_z|\!=\!\ell, w \!\notin\! \Ls \right)\P(|\cL_z|\!=\!\ell) 
    \nonumber\\ & \qquad 
    -(1-q)^m,
    \end{align}
where $\rm{(a)}$ follows from the fact that the event $\cM_{\mathcal{T}_z,w} = \mathbf{0}$ is a subset of the event $\cM_{\mathcal{T}_z,w}\!\preceq \ytz$, and in $\rm{(b)}$ we used {the law of total probability} and the fact that the event $\cM_{\mathcal{T}_z,w} = \mathbf{0}$ is independent of $w$, and it occurs with probability $(1-q)^m$, as shown in~\eqref{eq:Ieps}. Moreover, we have that
\begin{align}\label{eq:I1-notL:2}
    &\P \left(\cM_{\mathcal{T}_z,w}\preceq \ytz \!\bigm|\! |\cL_z|=\ell, w \notin \Ls \right)  \nonumber\\ 
    & \stackrel{\rm{(c)}}{=} \prod_{j \in \mathcal{T}_z} \mathbb{P}\left(\cM_{j,w}\preceq \bm{y}_j \!\bigm|\!|\Ls_z|=\ell, w \notin \Ls\right)\nonumber\\
    &= \prod_{j \in \mathcal{T}_z} \left(1-\mathbb{P} \left(\cM_{j,w} \succ \bm{y}_j\!\bigm|\!|\Ls_z|=\ell, w \notin \Ls \right)\right)  \nonumber\\
    &= \prod_{j \in \mathcal{T}_z} \left(1- \mathbb{P}\left(\cM_{j,w}=1, \bm{y}_j =0 \!\bigm|\! |\cL_z|=\ell, w\notin \Ls \right)\right) \nonumber \\
    &\stackrel{\rm{(d)}}{=} \!\!\prod_{j \in \mathcal{T}_z} \!\left(\hspace{-1pt}1\!-\! \mathbb{P}\left(\cM_{j,w}\!=\!1, \big\{\cM_{j,u} \!=\!0;  u\in \cL_z\big\} \hspace{-3pt}\bigm|\! |\cL_z|\!=\!\ell, w\hspace{-1pt}\notin\hspace{-1pt} \Ls \right)\hspace{-2pt}\right)\nonumber\\
    & = \prod_{j \in \mathcal{T}_z} \left(1-\cq(1-\cq)^\ell\right)  = \left(1-\cq(1-\cq)^\ell\right)^m,
\end{align}
where $\rm{(c)}$ follows from the fact that the entries of $\cM$ are generated independently and therefore, all of its rows are independent of each other, and in~$\rm{(d)}$ we used the fact that in the presence of $|\cL_z|=\ell$ %active 
{unreliable attacked workers,}
%machines, 
$\by_j=0$ if and only if 
{none of the unreliable attacked workers}
%neither of the active machines in $\cL_z$ 
are selected in group $j$. 
Note that each unreliable worker {will be attacked} in each time-slot with probability $\alpha$, independent of all the other workers. Hence, $\P(|\cL_z| = \ell) = \binom{L}{\ell} \alpha^\ell {(1-\alpha)}^{L-\ell}$. Plugging this and~\eqref{eq:I1-notL:2} into~\eqref{eq:I1-notL:1}, we arrive at
\begin{align}\label{eq:I1-notL:3}
    \P&\left(I_{w,z} = 1\!\bigm|\! w \notin \Ls\right) \nonumber\\
    &= \sum_{\ell=0}^\LB \binom{L}{\ell} \alpha^\ell {(1-\alpha)}^{L-\ell} (1-q(1-q)^\ell)^m - (1-q)^m\nonumber\\
    &= \hh_L - (1-q)^m.
    \end{align}
    Therefore, from~\eqref{eq:Ieps} and~\eqref{eq:I1-notL:3}, we get 
    \begin{align}
        \E[I_w | w\notin \Ls] &= \sum_{z=1}^Z \E[I_{w,z}| w\notin \Ls] \nonumber\\
        &= Z\left(\hh_L - (1-q)^m + \epsilon (1-q)^m\right),
    \end{align}
which proves the first claim in Proposition~\ref{prop:expected_scores}.

Next, consider an unreliable worker $w \in \Ls$. First, using the law of total probability, we can write
\begin{align}\label{eq:I1-L:1}
    \mathbb{P}&\left(I_{w,z} = 1\bigm|  w \in \Ls \right) \nonumber\\
    =& \ \mathbb{P}\left(I_{w,z} = 1\bigm| w \in \Lst{z}, w \in \Ls \right) \P(w\in \cL_z| w\in \Ls)
     \nonumber\\
     & + \mathbb{P}\left(I_{w,z} = 1\bigm| w \notin \Lst{z}, w \in \Ls \right) \P(w\notin \cL_z| w\in \Ls) \nonumber\\
     \stackrel{\rm{(a)}}{=}&\  \alpha \mathbb{P}\left(I_{w,z} \hspace{-1pt}=\hspace{-1pt} 1 \hspace{-1pt}\bigm|\hspace{-1pt} w \in \Lst{z}\right) \hspace{-1pt} \nonumber
     \\&+\hspace{-1pt} (1\hspace{-1pt}-\hspace{-1pt}\alpha) \mathbb{P}\left(I_{w,z} \hspace{-1pt}=\hspace{-1pt} 1 \hspace{-1pt}\bigm|\hspace{-1pt} w \in \cL\setminus \Lst{z}\right),
\end{align}
where in~$\rm{(a)}$ we have used the fact that $\cL_z\subseteq \cL$. Then, similar to~\eqref{eq:I1-notL:1}, we can evaluate the first term in~\eqref{eq:I1-L:1} as
%For an unreliable worker $w \in \Ls$, we are first interested in calculating the probability $\mathbb{P}\!\left(I_{w,z} = 1\!\bigm|\! w \in \Ls\right)$ that the identity of this worker $I_{w,z}$ was $1$ and the expected value $\mathbb{E}\!\left[I_{w,z}\!\bigm|\! w \in \Ls\right]$. Towards that end, we first calculate the two other probabilities $\mathbb{P}\!\left(I_{w,z} \!=\! 1\!\bigm|\! w \in \Lst{z}, w \in \Ls \right)$ and $\mathbb{P}\!\left(I_{w,z} \!=\! 1\!\bigm|\! w \notin \Lst{z}, w \in \Ls \right)$, and subsequently apply the law of total probability to find  $\mathbb{P}\!\left(I_{w,z} = 1\!\bigm|\! w \in \Ls\right)$. The probability $\mathbb{P}\!\left(I_{w,z} \!=\! 1\!\bigm|\! w \in \Lst{z}, w \in \Ls \right)$ is given by 
\begin{align}\label{eq:I1-L:2} %\label{eq:first_helper}
&\mathbb{P} \left(I_{w,z} = 1 \bigm|  w \in \Lst{z} \right) \nonumber\\
&= \mathbb{P} \left(\cM_{\mathcal{T}_z,w} \preceq \ytz \!\bigm|\! w \in \Lst{z} \right) - \mathbb{P}\left( \cM_{\mathcal{T}_z,w}=\mathbf{0}\!\bigm|\! w \in \Lst{z} \right)\nonumber\\
%&\stackrel{\rm{(b)}}{=} \mathbb{P} \left(\cM_{\mathcal{T}_z,w} \preceq \ytz \!\bigm|\! w \in \Lst{z} \right) - \mathbb{P}\left( \cM_{\mathcal{T}_z,w}=\mathbf{0}\bigm| w \in \Lst{z} \right)\nonumber\\
&\stackrel{\rm{(b)}}{=} 1 - \left(1-\cq\right)^m,
\end{align}
where $\rm{(b)}$ holds 
%since: (i) $w\in \cL_z$ (i.e., $w$ is an 
%attached 
%{attacked}
%unreliable worker), and (ii)  $\cM_{j,w}=1$ imply  $\bm{y}_j=1$.
{because when $w\in \cL_z$ (i.e., $w$ is an attacked unreliable worker), $\cM_{j,w}=1$ implies  $\bm{y}_j=1$.}
%is due to the fact that  $\sM_{\mathcal{T}_z,w}=\cM_{\mathcal{T}_z,w}$ when $w \in \Lst{z}$ according to the Z-channel in~\eqref{eq:sampling}, and $\rm{(c)}$ holds since  $\bm{x}_w=1$ (from $w \in \Ls$) and therefore $\bm{y}_j = \sM_{j,:} \odot \bm{x}$ is equal to $1$ whenever $\sM_{j,w}=1$. 

The second term in~\eqref{eq:I1-L:1} can be evaluated similar to~\eqref{eq:I1-notL:1} and~\eqref{eq:I1-notL:2}. However, since $w\in \cL\setminus \cL_z$, the size of $\cL_z$ can be any integer between $0$ and $L-1$. This leads to 
\begin{align}\label{eq:I1-L:3}
    \mathbb{P}\left(I_{w,z} = 1 \bigm| w \in \Ls\setminus\Lst{z} \right) = \hh_{\LB-1} - (1-q)^m. 
\end{align}
Plugging \eqref{eq:I1-L:2} and~\eqref{eq:I1-L:3} in~\eqref{eq:I1-L:1} we get 
\begin{align} \label{eq:I1-L:4}
    \mathbb{P}\big(I_{w,z} = 1 &\bigm| w \in \Ls \big) \nonumber\\ 
    &= \alpha \left(1-(1-q)^m\right) + (1-\alpha)\left(\hh_{\LB-1} - (1-q)^m\right)\nonumber\\
    &= \alpha + (1-\alpha) \hh_{\LB-1} - (1-q)^m.
\end{align}
Therefore, { using~\eqref{eq:I1-L:4} and \eqref{eq:Ieps}, we get}
\begin{align}
    \E[I_w|w\in \cL] &= \sum_{z=1}^Z  \E[I_{w,z}|w\in \cL] \nonumber\\
    &= Z\left( \alpha + (1-\alpha) \hh_{\LB-1} - (1-\epsilon)(1-q)^m\right),\nonumber
\end{align}
which completes the proof of Proposition~\ref{prop:expected_scores}. 
%\end{proof}

%\begin{proof}[Proof of Proposition~\ref{prop:bounds_hL}]
\section{Proof of Proposition~\ref{prop:bounds_hL}} \label{app:proof_prop2}
The lower bound follows easily from the definition of $\hh_\LB$. Since $(1-q)^{\ell} \leq 1$, we have that
\begin{align}
     \hh_\LB &= \sum_{\ell=0}^{\LB} \binom{\LB}{\ell} \al^{\ell} (1-\al)^{\LB-\ell}\left(1-\cq(1-\cq)^\ell\right)^m \nonumber \\
    &\geq \sum_{\ell=0}^{\LB} \binom{\LB}{\ell} \al^{\ell} (1-\al)^{\LB-\ell}\left(1-\cq\right)^m = \left(1-\cq\right)^m. \nonumber
\end{align}
In order to prove the upper bound, we first note that ${g(\ell):=(1-q(1-q)^\ell)^m}$ is a concave function provided that  $mq \leq 1$. Note, in fact, that 
\[
\frac{{\rm{d}}^2 g(\ell)}{{\rm{d}}\ell^2} \! =\! \frac{-mq (\log(1\hspace{-1pt}-\hspace{-1pt}q))^2 (1\hspace{-1pt}-\hspace{-1pt}q)^\ell  (1\hspace{-1pt}-\hspace{-1pt}mq (1\hspace{-1pt}-\hspace{-1pt}q)^\ell) {g(\ell)}}{(1-q(1-q)^{\ell})^{2}}\!\leq\! 0,
\]
when $mq \leq 1$. Then, using Jensen's inequality, we get 
\begin{align}
   \hh_{\LB} &= \sum_{\ell=0}^{\LB} \binom{\LB}{\ell} \al^{\ell} (1-\al)^{\LB-\ell}\left(1-\cq(1-\cq)^\ell\right)^m  \nonumber \\
   &\stackrel{\rm{(a)}}{=} \E[g(\ell)] \stackrel{\rm{(b)}}{\leq} g(\E[\ell]) = {g(\al {\LB)}} = \left(1-\cq(1-\cq)^{\LB \al}\right)^m \nonumber \\ &\stackrel{\rm{(c)}}{\leq} \left(1-\cq+\LB \al \cq^2\right)^m  \stackrel{\rm{(d)}}{\leq} \left(1-\cq\right)^m+m\LB \al \cq^2, 
   %\nonumber\\
%   &\leq  \left(1-\cq\right)^m+ c_1 \al \tht^2 
%    \stackrel{\rm{(e)}}{\leq}\left(1-\cq\right)^m+ \al \tht \nonumber,
\end{align}
where: the expectation in $\rm{(a)}$ is with respect to a binomial distribution $\mathsf{Bin}(L,\alpha)$, in~$\rm{(b)}$ we used Jensen's inequality,  $\rm{(c)}$ holds because $(1-\cq)^{\LB \al} \geq 1-\cq \LB\al$ by Bernoulli's inequality, and $\rm{(d)}$ follows from the inequality ${(x+y)^m \leq x^m + myx^{m-1}}$ along with the fact that ${(1-\cq)^{m-1} \leq 1}$. This completes the proof of~\eqref{eq:h:bound}. 

Next, we have that
\begin{align}\label{eq:P_unrel}
 &\hh_{\LB} - (1-\al)\hh_{\LB-1}\nonumber\\
    &  =\sum_{\ell={0}}^{\LB} \al^\ell (1-\al)^{\LB-\ell} \left(1-\cq(1-\cq)^\ell\right)^m \left(\binom{\LB}{\ell}-\binom{\LB-1}{\ell}\right) \nonumber \\
    & \stackrel{\rm{(e)}}{\leq}  \sum_{\ell={0}}^{\LB} \al^\ell (1-\al)^{\LB-\ell} \left(1-\cq(1-\cq\LB)\right)^m \binom{\LB-1}{\ell-1}\nonumber \\
    &= \alpha (1-q(1-qL))^m \sum_{k=0}^{L-1} \binom{L-1}{k} {\alpha^{k}} (1-\alpha)^{L-1-k} \nonumber\\
    & =
    \alpha (1-q(1-qL))^m \stackrel{\rm{(f)}}{\leq}   {\al} \left( 1-  \frac{mq}{2}(1-qL) \right ),
\end{align}
where in~$\rm{(e)}$ we used the Pascal's identity as well as the fact that $(1-q)^\ell \geq (1-q)^L \geq 1-qL$ for every $\ell\leq L$, and $\rm{(f)}$ follows from $(1-x)^m \leq 1-mx/2$, which holds true provided that $mx\leq 1$. 
This proves~\eqref{eq:h:diff} and concludes the proof of Proposition~\ref{prop:bounds_hL}.
%\end{proof}

\section{Proof of Lemma~\ref{Lemma:paritycheckfunction}} \label{app:proof_lemma1}

First the correct answers $\{{\mathbf{a}}_t^{(j)}: j \in \cU_i\}$ satisfy a parity corresponding to the row $\pM_{i,:}$ as follows,
\begin{align} \label{eq:correct_answerparity}
    \sum_{j\in \cU_i} \!\pM_{i,j} {\mathbf{a}}_t^{(j)} &\!\stackrel{\rm{(a)}}{=}\!\!\! \sum_{w\in [\N]}\! \pM_{i,w} \mathbf{a}_t^{(w)} \!=\!\!  \sum_{w\in [\N]} \! \pM_{i,w} \left(\mathbf{W}^{(w)} \cdot \vect{t}\right) \nonumber\\
   &= \sum_{w=1}^{\N} \pM_{i,w} \left(\sum_{j=1}^{k} \mathsf{G}_{j,w} \mat{t}_{j}\cdot \vect{t}\right) \nonumber \\
   &= \sum_{j=1}^{k} \sum_{w=1}^{\N} \left(\pM_{i,w}  \mathsf{G}_{j,w} \right) \mat{t}_{j}\cdot \vect{t} \stackrel{\rm{(b)}}{=} \mathbf{0},
\end{align}
where $\rm{(a)}$ follows since $\cU_i = \supp\left(\pM_{i,:}\right)$ and $\rm{(b)}$ follows because $\pM \cdot \sG^T =\mathbf{0}$. Moreover,
by~\eqref{eq:z-channel}, we know that ${\tilde{\mathbf{a}}_t^{(j)} = {\mathbf{a}}_t^{(j)}+\mathbf{z}_t^{(j)}}$ for $j \in \Lst{t}\cap \cU_i$ and $\tilde{\mathbf{a}}_t^{(j)} = {\mathbf{a}}_t^{(j)}$ for $j \in \cU_i \setminus \Lst{t}$. Therefore, we have 
\begin{align*}
\Gamma_t(\cU_i)&=\!\! \sum_{j\in \cU_i} \!\pM_{i,j}  \tilde{\mathbf{a}}_t^{(j)} = \!\!\sum_{j\in \cU_i} \!\pM_{i,j} {\mathbf{a}}_t^{(j)} + \!\!\!\! \sum_{j\in \Lst{t} \cap \cU_i}\!\!\!\! \pM_{i,j} {\mathbf{z}}_t^{(j)} \\
& \stackrel{\rm{(c)}}{=} \!\!\! \sum_{j\in \Lst{t} \cap \cU_i} \!\!\! \pM_{i,j} {\mathbf{z}}_t^{(j)}, 
\end{align*}
where $\rm{(c)}$ follows from~\eqref{eq:correct_answerparity}. Then, we have that
\begin{align}
\P\left[\Gamma_t(\cU_i)=\mathbf{0}\right] &= \P\left[\sum_{j\in \Lst{t} \cap \cU_i} \!\!\!\! \pM_{i,j} {\mathbf{z}}_t^{(j)} =\mathbf{0}\right] \nonumber\\
&= 
    \begin{cases}
         1 &\text{if } \Lst{t}\cap \cU_i = \varnothing,\\
         \frac{1}{|\mathbb{F}|} &\text{if }  \Lst{t}\cap \cU_i \neq \varnothing,
    \end{cases}
\end{align}
where the last equation follows because the unreliable workers are assumed to be non-colluding and unaware of the entries of $\pM$. This implies that each entry of the vector $\sum_{j\in \Lst{t} \cap \cU_i} \pM_{i,j} \mathbf{z}_t^{(j)}$ is a linear combination of the corresponding entries chosen by the attacked workers in $\cL_t \cap \cU_i$, and since these workers are not communicating, they choose their inserted noises independent of each other. Hence, their linear combination admits a uniform distribution over $\mathbb{F}$, which will be $0$ with probability\footnote{It may seem that this probability should be $1/|\mathbb{F}|^s$, since there are $s$ entries in the vector of interest. However, the attacked worker may only insert noise in a certain position, and not all the entries may be corrupted by their noise.} $1/|\mathbb{F}|$. 
%is zero w selected from $\mathbb{F}$ according , if attacked, can be any element of $\mathbb{F}$ with equal probability, and therefore, 
%Since, at least one element of $\sum_{j\in \Lst{t} \cap \cU_i} \pM_{i,j} \mathbf{z}_t^{(j)}$ is attacked, 
%the entire vector $\sum_{j\in \Lst{t} \cap \cU_i} \pM_{i,j} \mathbf{z}_t^{(j)}$ 
%has at least a probability $1-\frac{1}{|\mathbb{F}|}$ of being non-zero. This concludes the proof of Lemma~\ref{Lemma:paritycheckfunction}.
This concludes the proof of Lemma~\ref{Lemma:paritycheckfunction}.

\section{Proof of Lemma~\ref{corollary:reconstruction}}\label{app:proof_lemma2}

 If the parity matrix $\pM$ satisfies the reconstruction criterion for a worker $w\in \Ls$, then there exists a row $i\in [M]$ {in $\pM$} which satisfies ${\cU_i \cap \Ls = w}$ and ${\cU_i\cap \left([\N]\setminus \Ls \right)\neq \varnothing}$. Therefore, we have that
\begin{align*}
&\left(\pM_{i,w}\right)^{-1}\left(\pM_{i,w} \tilde{\mathbf{a}}_t^{(w)}-\Gamma_t\left(\cU_i\right)\right) \\
    &=\left(\pM_{i,w}\right)^{-1}\left(\pM_{i,w} \tilde{\mathbf{a}}_t^{(w)}-\sum_{j\in \cU_i} \pM_{i,j} \tilde{\mathbf{a}}_t^{(j)}\right) \\
    &\stackrel{\rm{(a)}}{=} \left(\pM_{i,w}\right)^{-1}\left(\pM_{i,w} \tilde{\mathbf{a}}_t^{(w)}\!-\!\!\!\sum_{j\in [\N]\setminus\Ls} \pM_{i,j} \tilde{\mathbf{a}}_t^{(j)}  \!-\!\sum_{j\in \Ls} \pM_{i,j} \tilde{\mathbf{a}}_t^{(j)}\right) \\
    &\stackrel{\rm{(b)}}{=} \left(\pM_{i,w}\right)^{-1}\left(\pM_{i,w} \tilde{\mathbf{a}}_t^{(w)}\!-\!\!\!\sum_{j\in [\N]\setminus\Ls} \pM_{i,j} \tilde{\mathbf{a}}_t^{(j)} - \pM_{i,w} \tilde{\mathbf{a}}_t^{(w)}\right) \\
   &\stackrel{\rm{(c)}}{=} \left(\pM_{i,w}\right)^{-1}\left(-\sum_{j\in [\N]\setminus\Ls} \pM_{i,j} {\mathbf{a}}_t^{(j)} \right) \\
    &= \left(\pM_{i,w}\right)^{-1}\left(-\!\!\!\!\sum_{j\in [\N]\setminus\Ls}\!\!\! \pM_{i,j} {\mathbf{a}}_t^{(j)}\!-\pM_{i,w}{\mathbf{a}}_t^{(w)} \!+  \pM_{i,w}{\mathbf{a}}_t^{(w)}\right) \\
    &= \left(\pM_{i,w}\right)^{-1}\left(-\sum_{j\in\cU_i} \pM_{i,j} {\mathbf{a}}_t^{(j)}\right)\!+ {\mathbf{a}}_t^{(w)} \stackrel{\rm{(d)}}{=} {\mathbf{a}}_t^{(w)},
\end{align*}
where: 
$\rm{(a)}$ follows since $\cU_i = \supp\left(\pM_{i,:}\right)$,
$\rm{(b)}$ follows because $\cU_i \cap \Ls = w$ by the reconstruction criterion, $\rm{(c)}$ follows because for reliable workers ${j \in [\N]\setminus \Ls}$, $\tilde{\mathbf{a}}_t^{(j)} = {\mathbf{a}}_t^{(j)}$, and $\rm{(d)}$ holds because $\sum_{j\in\cU_i} \pM_{i,j} {\mathbf{a}}_t^{(j)}=\mathbf{0}$ by~\eqref{eq:correct_answerparity}.
This concludes the proof of Lemma~\ref{corollary:reconstruction}.

 \section{Proof of Lemma~\ref{Lemma:reconstruction}} \label{app:proof_lemma3}

 Note that each entry of the group testing contact matrix ${\cM \in \{0,1\}^{M \times \N}}$ is identically and independently (i.i.d.) generated and is equal to $1$ and $0$ with probabilities $\cq=\frac{\tht}{\LB}$ and $1-\cq$, respectively (where $\tht \leq 1$). Therefore, every entry in the corresponding parity matrix $\pM$ is non-zero and zero with probabilities $\cq$ and $1-\cq$, respectively. Let $E_w$ be the event that $\pM$ satisfies the reconstruction criterion for $w \in \Ls$ and let $\cU_j = \supp\left(\pM_{j,:}\right)$ for $j\in [M]$. Then, we have that
 \begin{align*}
     &\mathbb{P}\left(E_w^c\right) = \prod_{j=1}^{M} \mathbb{P}\left(\left(\cU_j \cap \Ls \neq w\right) \cup \left(\cU_{j} \cap ([\N]\setminus \Ls) = \varnothing \right) \right)\\
     &= \prod_{j=1}^{M} \left(1-\mathbb{P}\left(\left(\cU_j \cap \Ls = w\right) \cap \left(\cU_{j} \cap ([\N]\setminus \Ls) \neq \varnothing \right) \right)\right)\\
     &= \prod_{j=1}^{M} \left(\!1\!-\!\mathbb{P}\left(\pM_{j,w}\!\neq 0\right) \mathbb{P}\left(\pM_{j,\Ls\setminus\{w\}}\!\!= \mathbf{0}\right) \mathbb{P}\left(\pM_{j,[\N]\setminus\Ls}\!\neq \mathbf{0}\right) \right)\\
     &\stackrel{\rm{(a)}}{=} \left(1-\cq(1-\cq)^{\LB-1} \left(1-(1-\cq)^{\N-\LB}\right)\right)^M \\
     &\leq \left(1-\cq(1-\cq)^{\LB} \left(1-(1-\cq)^{\N-\LB}\right)\right)^M\\
     & = \left(1-\cq(1-\cq)^{\LB} + \cq(1-\cq)^{\N}\right)^M \\
     & \stackrel{\rm{(b)}}{\leq} \left(1-\cq(1-\cq)^{\LB} + \cq(1-\cq)^{2 \LB}\right)^M \\
     & \stackrel{\rm{(c)}}{\leq} \left(1-\cq\exp\left(-\frac{\tht}{1-\frac{\tht}{\LB}}\right) + \cq\exp\left(-2\tht\right)\right)^M \\
     & \stackrel{\rm{(d)}}{\leq} \left(1-\cq \exp\left(-\frac{\tht}{1-{\tht}}\right) + \cq\exp\left(-2\tht\right)\right)^{M}\\
     & = \left(1-
     \frac{\tht r_\tht}{\LB}\right)^{M} \stackrel{\rm{(e)}}{\leq} \exp\left(-\frac{M\tht r_{\tht}}{\LB}\right),
 \end{align*}
 \noindent where, $r_\tht \triangleq \exp\left(-\frac{\tht}{1-\tht}\right) - \exp\left(-2\tht\right)$ is a constant function of $\tht$. Step $\rm{(a)}$ follows because $\mathbb{P}\left(\pM_{j,w}\neq 0\right) = \cq$, ${\mathbb{P}\left(\pM_{j,\Ls}= \mathbf{0}\right) = (1-\cq)^{\LB-1}}$ and ${\mathbb{P}\left(\pM_{j,[\N]\setminus\Ls}\!\neq \mathbf{0}\right) = 1 - \mathbb{P}\left(\pM_{j,[\N]\setminus\Ls}\!= \mathbf{0}\right)} = 1-(1-\cq)^{\N-\LB}$; in $\rm{(b)}$, we have assumed\footnote{This assumption is justified because $ \LB = o(\N)$ for group testing to be relevant, because if $\LB \approx \N$, then testing one worker at a time is optimal and there is no requirement for group testing~\cite{Flodin2021},~\cite{Chan2011}.} that $\N \geq 2 \LB$; $\rm{(c)}$ follows from $\exp(-\frac{x}{1-x}) \leq 1-x \leq \exp(-x)$ which holds for $0<x<1$ and setting $\cq = \frac{\tht}{\LB}$ from the choice of parameters in Section~\ref{section:group_testing}; in $\rm{(d)}$, we have used the fact that $\LB \geq 1$; and finally $\rm{(e)}$ follows because $1-x \leq \exp(-x)$.
Moreover, using the union bound, the probability that the reconstruction criterion does not hold for at least one unreliable worker is upper bounded by
\begin{align*}
    &\sum_{w\in \Ls} \!\mathbb{P}\left(E_w^c\right)\!\leq\! \LB \exp\left(\!\!-\frac{M\tht r_\tht}{\LB}\right) \!\leq\! \N \exp\left(\!-\frac{M\tht r_\tht}{\LB}\right) \\
     &\!\stackrel{\rm{(f)}}{=} \N \exp\left(\!-\frac{450 (1\!+\!\beta) \log(\N)\tht r_\tht}{\al }\right) \!\stackrel{\rm{(g)}}{\leq}\! \N^{-\beta},
\end{align*}
where {in $\rm{(f)}$ we have used the value of $M$ found in Section~\ref{section:group_testing} and} in $\rm{(g)}$ we have used the choice of parameter ${\tht = 0.15}$ from Section~\ref{section:group_testing}. This concludes the proof of Lemma~\ref{Lemma:reconstruction}.
\end{document}